\newif\ifdraft
\newtheorem{theorem}{Theorem}[section]
\newtheorem{lemma}[theorem]{Lemma}
\newtheorem{corollary}[theorem]{Corollary}
\theoremstyle{definition}
\newtheorem{definition}{Definition}[section]
\theoremstyle{definition}
\newtheorem*{remark}{Remark}
\newtheorem{observation}[theorem]{Observation}
\newcommand{\initC}{K}
\newcommand{\set}[1]{\left\{#1\right\}}
\newcommand{\ev}[2][]{\ifthenelse{\isempty{#1}}{\ensuremath{E\left[#2\right]}}{\ensuremath{E\left[#2 \mid #1\right]}}}
\newcommand{\E}[2][]{\ifthenelse{\isempty{#1}}{\ensuremath{E\left[#2\right]}}{\ensuremath{E\left[#2 \mid #1\right]}}}
\newcommand{\nh}{\ensuremath{\Gamma}}
\newcommand{\bigO}[1]{\ensuremath{O\left(#1\right)}}
\newcommand{\ind}{\ensuremath{\mathbbm{1}}}
\newcommand{\eps}{\ensuremath{\epsilon}}
\newcommand{\logstar}{\ensuremath{\log^*}}
\newcommand{\CONGEST}{\ensuremath{\mathsf{CONGEST}}\xspace}
\newcommand{\LOCAL}{\ensuremath{\mathsf{LOCAL}}\xspace}
\newcommand{\SETLOCAL}{\ensuremath{\mathsf{SET}\text{-}\mathsf{LOCAL}}\xspace}
\DeclareMathOperator{\polylog}{polylog}
\DeclareMathOperator{\poly}{poly}
\newcommand{\calM}{\mathcal{M}}
\newcommand{\calX}{\mathcal{X}}
\algnewcommand{\NoAlignComment}[1]{\hspace{1cm}\textit{// #1}}
\algnewcommand{\FullLineComment}[1]{\Statex\textit{// #1}}
\newif\ifhideproofs
\newcommand{\myemail}[1]{\,$\cdot$\, {\small #1}}
\newcommand{\myaff}[1]{\,$\cdot$\, {\small #1}\par\smallskip}
\newenvironment{mycover}
{\list{}{\listparindent 0pt
        \itemindent    \listparindent
        \leftmargin    1cm
        \rightmargin   1cm
        \parsep        0pt}%
    \raggedright
    \item\relax}
{\endlist}
\newenvironment{myabstract}
{\list{}{\listparindent 1.5em%
        \itemindent    \listparindent
        \leftmargin    1cm
        \rightmargin   1cm
        \parsep        0pt}%
    \item\relax}
{\endlist}
\title{}
\begin{document}

\begin{mycover}
{\LARGE\bfseries\boldmath Efficient Deterministic Distributed Coloring with Small Bandwidth \par}
\bigskip
\bigskip
\bigskip

\textbf{Philipp Bamberger}
\myemail{philipp.bamberger@cs.uni-freiburg.de}
  \myaff{ University of Freiburg}

 \textbf{Fabian Kuhn}
  \myemail{ kuhn@cs.uni-freiburg.de}
	\myaff{University of Freiburg}

\textbf{Yannic Maus\footnote{Supported by the European Union's Horizon 2020 Research And  Innovation Programme under grant agreement no. 755839.}}
 \myemail{yannic.maus@campus.technion.ac.il}
\myaff{Technion}
\end{mycover}
\bigskip

\thispagestyle{empty}
\begin{myabstract}
\noindent\textbf{Abstract.}
  We show that the $(degree+1)$-list coloring problem can be solved
  deterministically in $O(D \cdot \log n \cdot\log^2\Delta)$ rounds in the
  \CONGEST model, where $D$ is the diameter of the graph, $n$ the
  number of nodes, and $\Delta$ the maximum degree.  Using the
  recent polylogarithmic-time deterministic network decomposition
  algorithm by Rozho\v{n} and Ghaffari [STOC 2020], this implies the
first efficient (i.e., $\poly\log n$-time) deterministic \CONGEST
algorithm for the $(\Delta+1)$-coloring and the $(\mathit{degree}+1)$-list
coloring problem. Previously the best known algorithm required
$2^{O(\sqrt{\log n})}$ rounds and was not based on network
decompositions.

Our techniques also lead to deterministic $(\mathit{degree}+1)$-list coloring algorithms for the
congested clique and the massively parallel computation (MPC) model. For the congested clique, we
obtain an algorithm with time complexity
$O(\log\Delta\cdot\log\log\Delta)$, for the MPC model, we obtain
algorithms with round complexity $O(\log^2\Delta)$ for the
linear-memory regime and $O(\log^2\Delta + \log n)$ for the sublinear
memory regime.
\end{myabstract}

\newpage
\setcounter{page}{1}
\section{Introduction}

In the distributed message passing model, a communication network is abstracted as an $n$-node graph $G=(V,E)$. The nodes of $G$ host processors that communicate with each other through the edges of the graph. In the context of distributed graph algorithms, the objective is to solve some graph problem on $G$ by a distributed message passing algorithm. One of the most important and most extensively studied problems in the area is the  \emph{distributed graph coloring problem}, where we need to compute a proper (vertex) coloring of the communication graph $G$. Typically, at the beginning of an algorithm, the nodes of $G$ do not know anything about $G$, except maybe the names of their immediate neighbors and at the end of an algorithm, each node of $G$ needs to know its local part of the solution of the given graph problem (e.g., for distributed coloring, at the end, each node must know its own color). 
The two classic models in which distributed graph algorithms have been studied are the \LOCAL and the \CONGEST model~\cite{linial92,peleg2000distributed}. In both models, time is divided into synchronous rounds and in each round, each node can perform arbitrary internal computations and send a message to each of its neighbors. The time complexity of an algorithm is the number of rounds required for the algorithm to terminate. In the \LOCAL model, messages can be of arbitrary size, whereas in the \CONGEST model, all messages have to be of size $O(\log n)$ bits.

\paragraph{Distributed Graph Coloring.}
Computing a coloring with the optimal number of colors $\chi(G)$ was one of the first problems known to be NP-complete \cite{Karp1972}. In the distributed setting, one therefore aims for a more relaxed goal and the usual objective is to color a given graph $G$ with $\Delta+1$ colors, where $\Delta$ is the maximum degree of $G$ \cite{barenboimelkin_book}. Note that any graph admits such a coloring and it can be computed by a simple sequential greedy algorithm. Despite the simplicity of the sequential algorithm, the question of determining the distributed complexity of computing a $(\Delta+1)$-coloring has been an extremely challenging question. In particular, while $O(\log n)$-time randomized distributed $(\Delta+1)$-coloring algorithms have been known for more than 30 years, the question whether a similarly efficient (i.e., polylogarithmic time) \emph{deterministic} distributed coloring algorithm exists remained one of the most important open problems in the area \cite{barenboimelkin_book,stoc17_complexity} until it was resolved very recently by Rozho\v{n} and Ghaffari~\cite{RG19}. In \cite{RG19} the question was answered in the affirmative for the \LOCAL model by designing an efficient deterministic distributed method to decompose the communication graph into a logarithmic number of subgraphs consisting of connected components (clusters) of polylogarithmic (weak) diameter, a structure known as a \emph{network decomposition}~\cite{awerbuch89}. It was known before that an efficient deterministic algorithm for network decomposition would essentially imply efficient determinstic \LOCAL algorithms for all problems for which efficient randomized algorithms exist~\cite{linial92,awerbuch89,barenboimelkin_book,stoc17_complexity,FOCSDerand}.

Note that due to the unbounded message size any (solvable) problem can be solved in \emph{diameter} time in the \LOCAL model by simply collecting the whole graph topology at one node, solving the problem locally (potentially using unbounded computational power), and redistributing the solution to the nodes. Thus the small diameter components of a network decomposition almost immediately give rise to an efficient $(\Delta+1)$-coloring algorithm in the \LOCAL model. In light of the breakthrough by Rozho\v{n} and Ghaffari~\cite{RG19}, which implies that a polylogarithmic-time deterministic $(\Delta+1)$-coloring algorithm exists in the \LOCAL model, it is natural to ask whether a polylogarithmic-time deterministic algorithm also exists in the more restricted, but seemingly also more realistic \CONGEST model. As the main result of this paper, we answer this question in the affirmative.

\subsection{Our Contributions in \CONGEST}
The main technical contribution of this work is to provide an efficient deterministic \CONGEST model algorithm for the $(\Delta+1)$-coloring problem, and more generally the $(\mathit{degree}+1)$-list coloring problem in time proportional to the diameter of the graph. By using the network decomposition algorithm by Rozho\v{n} and Ghaffari~\cite{RG19}, the result also implies efficient \CONGEST algorithms for general graphs even if the diameter is large.

Given a graph $G=(V,E)$, a \emph{color space} $[C]$ and color lists $L(v)\subseteq [C]$ for each $v\in V$, a \emph{list-coloring} of $G$ is a proper $C$-coloring such that each node is colored with a color from its list. In this paper we consider the $(\mathit{degree}+1)$-list coloring problem, where the list sizes are $|L(v)|=\deg(v)+1$. Throughout the paper we assume that each color from each node's list fits in $O(1)$ messages in the \CONGEST model, i.e., we only consider list-coloring instances with $C=\poly n$. Whenever we do not explicitly mention the color space we assume $C=\poly\Delta$. Note that just as the $(\Delta+1)$-coloring problem the $(\mathit{degree}+1)$-list coloring problem admits a simple sequential greedy algorithm.

While in the \LOCAL model, the diameter of the graph is a trivial upper bound on the time needed to solve any graph problem, it is not clear per se whether a small diameter also helps for solving problems in the \CONGEST model. So far, there are a few examples where a small diameter helps in the \CONGEST model. In particular, the problems of computing a maximal independent set, a sparse spanner, and an $(1+\eps)\log \Delta$-approximation of a minimum dominating set can all be solved deterministically in time $D\cdot \polylog n$~\cite{CPS17,DISC18_DomSet,DKM19}. Other problems cannot profit from small diameter, e.g., verifying or computing a minimum spanning tree requires $\tilde{\Omega}(D+\sqrt{n})$ \CONGEST rounds~\cite{dassarma11} and solving many optimization problems exactly (minimum dominating set, vertex cover, chromatic number) or almost exactly (maximum independent set) requires $\tilde{\Omega}(n^2)$ \CONGEST rounds even if the diameter of the graph is constant \cite{CKP17,BCDELP19}.

\medskip

We prove the following theorem (when assuming $C\leq \poly\Delta$).

\begin{theorem}[simplified]\label{thm:diameterListsimple}
There is a deterministic \CONGEST algorithm that solves the list-coloring problem for instances $G=(V,E)$ with $|L(v)|\geq\deg(v)+1$ in time $\bigO{D\cdot \log n\cdot \log^2\Delta}$.
\end{theorem}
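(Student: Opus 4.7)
The plan is to exploit the small diameter by constructing a BFS tree of $G$ in $O(D)$ rounds and using it as a global communication substrate: any additively decomposable quantity over the nodes (with $O(\log n)$-bit per-node values) can be aggregated to the root, and any $O(\log n)$-bit decision can be broadcast back to all nodes in $O(D)$ rounds by standard pipelining along the tree. This effectively lets a central coordinator at the root compute a function of the global state at the cost of $O(D)$ CONGEST rounds per $O(\log n)$-bit summary. With this substrate in place, the goal reduces to designing a deterministic $(\mathit{deg}{+}1)$-list coloring procedure whose ``logical'' round complexity is $O(\log n \cdot \log^2 \Delta)$, where each logical round requires only $O(\log n)$-bit global coordination plus constant-many one-hop message exchanges per node.

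I would structure this logical algorithm as an outer loop of $O(\log \Delta)$ phases, each of which colors a constant fraction of the currently uncolored nodes while preserving the invariant $|L(v)| \geq \deg(v)+1$ on the residual graph; the invariant is automatic because whenever a neighbor $u$ commits to a color, both $v$'s residual degree and $v$'s list shrink by at most one (the single color $u$ chose). Each outer phase is implemented by derandomizing a single randomized trial in which every uncolored node picks a color from its list via a pairwise-independent hash function of seed length $O(\log n)$ on its id and keeps that color if no neighbor picked the same. Since $|L(v)| \geq \deg(v)+1$, a pairwise-independent draw succeeds with probability at least $1/2$, so in expectation a constant fraction of uncolored nodes are colored in one trial. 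To select a good seed deterministically, the method of conditional expectations is executed at the root of the BFS tree: bit by bit over the $O(\log n)$ seed bits, each node locally computes its conditional contribution to the expected number of colored nodes, the contributions are summed at the root in $O(D)$ rounds, and the better bit is broadcast back. This costs $O(D \log n)$ rounds per phase, and an additional inner loop of $O(\log \Delta)$ refining trials per outer phase (needed to drive the ``remaining conflict mass'' down to a point where the degree strictly halves) produces the final $O(D \log n \log^2 \Delta)$ bound.

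The main obstacle is the tight $O(\log n)$-bit bandwidth: a node cannot reveal its $\Theta(\Delta)$-size list $L(v)$ to its neighbors, so the trial must exchange only a single candidate color per edge per round, and the potential function feeding the method of conditional expectations must be \emph{locally decomposable}, i.e., expressible as a sum of per-node quantities each of which depends only on $v$'s candidate color and the candidates of its one-hop neighbors. Designing the trial and the potential so that (i) the success-probability analysis still goes through with only pairwise independence, and (ii) each node can compute its own conditional expected contribution from just the seed prefix and its local view, is where most of the technical work lies. Once this locality of the potential is established, BFS-tree aggregation composes cleanly with the method of conditional expectations to give the claimed diameter-dependent bound.
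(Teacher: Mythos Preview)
Your high-level framework---BFS tree for aggregation, method of conditional expectations over a short seed, pairwise independence---matches the paper. But the proposal has a genuine gap exactly where you flag ``the main obstacle'': you never say \emph{how} a node computes its conditional expected contribution without knowing its neighbors' lists. If each node draws a full color from $L(v)$ via a hash of its id, then the event ``$u$ and $v$ collide'' depends on $L(u)\cap L(v)$, and evaluating its conditional probability given a seed prefix requires $u$ to know $L(v)$. You acknowledge this and then write ``once this locality of the potential is established\ldots''---but establishing it \emph{is} the proof. The paper's missing idea is to fix the color \emph{one bit at a time}: in phase $\ell$ each node extends its color prefix by one bit, and the potential $\Phi_\ell(u)=\deg_\ell(u)/|L_\ell(u)|$ (where $\deg_\ell$ counts neighbors with the same prefix) is locally computable because a neighbor only needs to send the single number $k_1(v)=|\{x\in L_{\ell-1}(v):x[\ell]=1\}|$ to let $u$ compute all the relevant conditional probabilities. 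This bit-by-bit trick is what makes the conditional expectation CONGEST-computable, and it is absent from your plan.

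Separately, your loop accounting is inconsistent. You say the outer loop has $O(\log\Delta)$ phases ``each of which colors a constant fraction of the currently uncolored nodes''; but coloring a constant fraction per phase needs $O(\log n)$ phases to finish, not $O(\log\Delta)$. You then invoke an inner loop of $O(\log\Delta)$ ``refining trials'' to make ``the degree strictly halve,'' which is a different invariant and is not implied by coloring a constant fraction of nodes. The paper's actual structure is: $O(\log n)$ outer iterations (constant fraction colored each time) $\times$ $O(\log C)=O(\log\Delta)$ color-bit phases $\times$ $O(\log\Delta)$ seed bits each fixed in $O(D)$ rounds, giving $O(D\cdot\log n\cdot\log^2\Delta)$. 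The short $O(\log\Delta)$ seed (rather than your $O(\log n)$) comes from first running Linial to get an $O(\Delta^2)$-coloring and using those colors, not ids, as the hash inputs.
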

The algorithm to obtain \Cref{thm:diameterListsimple} is based on a similar basic strategy as the algorithms in \cite{CPS17,DISC18_DomSet,DKM19}, which achieve similar time complexities for other graph problems. 
More specifically, the proof of \Cref{thm:diameterListsimple} relies on designing a $0$-round randomized process that in expectation colors a constant fraction of the vertices. Then, we derandomize this process and iterate $O(\log n)$ times until all nodes are colored. 
In \Cref{ssec:nutshell} we elaborate on the challenges that occur with this approach and how we approach them. 

\medskip 
The most important implication of the result in \Cref{thm:diameterListsimple} is that it can be lifted to an efficient, i.e., $\polylog n$ time algorithm for general graphs, even for graphs with large diameter.  
\medskip

\begin{corollary}
\label{cor:effGraph}
There is a deterministic $\bigO{\log^8n}$-round \CONGEST algorithm for the $(\mathit{degree}+1)$-list-coloring problem.
\end{corollary}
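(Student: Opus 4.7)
The plan is to apply \Cref{thm:diameterListsimple} on top of the deterministic network decomposition of Rozho\v{n} and Ghaffari \cite{RG19}. In the \CONGEST model, that algorithm produces, in $\poly\log n$ rounds, a partition of $V$ into $k = O(\log n)$ color classes $V_1,\ldots,V_k$ such that every connected component (cluster) of each $G[V_i]$ has (strong) diameter $O(\poly\log n)$, with clusters of the same class pairwise non-adjacent.

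I would then sweep through the color classes in order $i=1,\ldots,k$. Entering phase $i$, every node of $V_1 \cup \cdots \cup V_{i-1}$ is already permanently colored, and each $v \in V_i$ shrinks its list to $L'(v) := L(v) \setminus \{c(u) : u \in N(v) \text{ is already colored}\}$. Since $|L(v)| \geq \deg(v)+1$ and at most $\deg(v) - \deg_{G[V_i]}(v)$ colors are removed, we have $|L'(v)| \geq \deg_{G[V_i]}(v)+1$. Thus the residual instance restricted to each cluster $C \subseteq V_i$ again satisfies the $(\mathit{degree}+1)$-list-coloring precondition with respect to the internal graph $G[C]$. Because clusters of $V_i$ are pairwise non-adjacent, I can invoke \Cref{thm:diameterListsimple} on all of them in parallel without any cross-cluster conflict, at a cost of $O(\mathrm{diam}(C) \cdot \log n \cdot \log^2 \Delta) = O(\poly\log n)$ rounds per phase. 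Summing over the $O(\log n)$ phases, adding the cost of computing the decomposition, and plugging in $\log\Delta \leq \log n$ together with the known cluster-diameter bound of the decomposition then yields the claimed $O(\log^8 n)$ total.

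The main obstacle will be the distinction between \emph{weak} and \emph{strong} cluster diameter: for \Cref{thm:diameterListsimple} to actually run in time proportional to $\mathrm{diam}(C)$ under $O(\log n)$-bit bandwidth, every cluster $C$ must contain a BFS/Steiner-type tree of depth $O(\poly\log n)$ consisting exclusively of cluster-internal edges, since messages that coordinate the coloring of $C$ must never be routed through already-colored clusters or through other clusters of the same color class. I would therefore explicitly use the strong-diameter variant of the Rozho\v{n}--Ghaffari decomposition and then verify that every primitive used by \Cref{thm:diameterListsimple} (in particular the global-BFS based preprocessing that broadcasts and aggregates lists) can be restricted to the intra-cluster tree, so that the simultaneous execution on all clusters of a given color class still respects the \CONGEST bandwidth. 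Once this routing bookkeeping is in place, the scheme above produces the desired deterministic $O(\log^8 n)$-round \CONGEST algorithm.
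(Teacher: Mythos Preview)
Your high-level plan (iterate through the color classes of a polylog network decomposition, update lists, run \Cref{thm:diameterListsimple} inside each cluster) matches the paper, and your list-update argument showing $|L'(v)|\ge \deg_{G[V_i]}(v)+1$ is exactly right. The genuine gap is in how you resolve the weak-versus-strong diameter issue. You propose to ``explicitly use the strong-diameter variant of the Rozho\v{n}--Ghaffari decomposition,'' but the paper states outright that \emph{no strong-diameter decomposition with polylogarithmic parameters is known to be computable in $\polylog n$ \CONGEST rounds} (see the paragraph preceding \Cref{def:nd}). The decomposition of \cite{RG19} is only weak-diameter: the low-depth tree spanning a cluster may use vertices and edges outside the cluster, and an edge of $G$ can lie in the trees of many clusters of the same color. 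Your plan to restrict all communication to intra-cluster edges therefore cannot be carried out with the tool you cite.

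The paper's fix is not to strengthen the decomposition but to weaken the requirement: it introduces the notion of an $(\alpha,\beta)$-network decomposition \emph{with congestion $\kappa$} (\Cref{def:nd}), where each cluster has an associated Steiner tree of depth $\beta$ that may leave the cluster, but every edge lies in at most $\kappa$ trees of the same color. The \cite{RG19} algorithm gives $\alpha=O(\log n)$, $\beta=O(\log^3 n)$, $\kappa=O(\log n)$. All aggregation and broadcast inside the run of \Cref{thm:diameterListsimple} for a cluster is routed over that cluster's Steiner tree; simultaneous runs on clusters of one color class are then handled by pipelining, incurring only a multiplicative $\kappa=O(\log n)$ overhead. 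Plugging in $\beta=O(\log^3 n)$ for the diameter term and multiplying by $\alpha\cdot\kappa=O(\log^2 n)$ gives the $O(\log^8 n)$ bound (dominated by the $O(\log^8 n)$ cost of computing the decomposition itself). Replacing your appeal to a strong-diameter decomposition with this congestion-plus-pipelining argument closes the gap.
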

The proof of \Cref{cor:effGraph} is based on iterating through the color classes of a \emph{suitable network decomposition} and applying \Cref{thm:diameterListsimple} on the clusters of the same color \cite{awerbuch89,RG19}. Many network decomposition algorithms, in particular the one in \cite{RG19}, only compute so-called \emph{weak-diameter network decompositions} in which the diameter of components is only small if edges and vertices outside the component can be used for communication. One needs additional care to use these decompositions in the \CONGEST model. For more details on the definition of a suitable network decomposition we refer to \Cref{sec:effCongest} and \cite{RG19}. 
We want to point out that improvements in computing such network decompositions immediately carry over to \Cref{cor:effGraph}.
\medskip

\Cref{cor:effGraph} is a drastic improvement over the state of the art even for the standard $(\Delta+1)$-coloring problem: Surprisingly until the beginning of 2019 the best deterministic \CONGEST algorithm for the $(\Delta+1)$-coloring problem was the $O(\Delta^{3/4}+\logstar n)$ algorithm by Barenboim \cite{barenboim15,barenboim18}. Even though the objective of \cite{barenboim15} was to optimize the runtime mainly as a function of the maximum degree $\Delta$, the paper also provided the fastest known algorithm if the runtime is solely expressed as a function of the number of nodes, i.e., it provided an $O(n^{3/4})$ round \CONGEST algorithm.  Very recently, the runtime for $(\mathit{degree}+1)$-list coloring was improved to $2^{O(\sqrt{\log \Delta})}\cdot\log n=2^{O(\sqrt{\log n})}$ rounds \cite{KuhnSoda20}. The algorithm in \cite{KuhnSoda20} does work in the \CONGEST model but it does not rely on network decompositions. Hence its runtime does not improve with the recent breakthrough result by Rozho\v{n} and Ghaffari \cite{RG19}.

\medskip
\subsection{Our Contributions in the CONGESTED CLIQUE and MPC}
The techniques of \Cref{thm:diameterListsimple} can also be adapted in an almost direct way to yield simple deterministic algorithms in the CONGESTED CLIQUE and the MPC model. By using the additional power in those models, we can speed up some parts of our derandomization. Because the main focus of our paper is on the \CONGEST model algorithm, the limited space in the main body of the paper is used for our \CONGEST results and all the proofs for the CONGESTED CLIQUE and the MPC model are moved to \Cref{sec:CC_MPC}.

\paragraph{The CONGESTED CLIQUE model \cite{Peleg03}:}
The CONGESTED CLIQUE model differs from the \CONGEST model in the way that the input graph might be different from the communication network. Given an input graph $G=(V,E)$ consisting of $n$ nodes, in each round, each node $u\in V$ can send a message of size $\bigO{\log n}$ to each other node $v\in V$ in the graph (i.e., although $G$ might be an arbitrary graph, the communication graph is a complete graph on the nodes $V$). Initially, each node only knows its neighbors in $G$. More specifically, we consider the UNICAST CONGESTED CLIQUE model, in which nodes are allowed to a send a different message to each other node in each round.

\smallskip

We prove the following result.

\begin{restatable}[CONGESTED CLIQUE]{theorem}{CongestedClique}\label{thm:CC}
There is a deterministic \text{CONGESTED CLIQUE} algorithm that solves the $(\mathit{degree}+1)$-list-coloring problem in time $O(\log\Delta\log\log\Delta)$.
\end{restatable}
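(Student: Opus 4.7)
The plan is to reuse the derandomized one-round ``try-a-random-color'' step that already underlies Theorem~\ref{thm:diameterListsimple}, but exploit the two features that distinguish the CONGESTED CLIQUE from CONGEST: any sum over all vertices can be aggregated and broadcast in $O(1)$ rounds via Lenzen's routing, and the $n$ vertices can act as parallel workers that collectively evaluate many candidate seed extensions in a single round. Overall I expect $O(\log\Delta)$ outer phases, each costing $O(\log\log\Delta)$ rounds of derandomization, so the round complexity multiplies out to $O(\log\Delta\cdot\log\log\Delta)$.

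For the phase structure I would choose a progress measure $\Phi$ that is the sum over uncolored vertices of their residual ``slack'' (e.g.\ $\deg(v)+1$ in the uncolored subgraph, mirroring the analysis used for Theorem~\ref{thm:diameterListsimple}) and argue that one derandomized step reduces $\Phi$ by a constant factor; after $O(\log\Delta)$ phases the residual instance becomes trivially list-colorable, say by direct enumeration in the clique. The randomized step itself would draw each vertex's attempted color from a pairwise (or constant-wise) independent sample space, so that the whole per-phase seed has length only $O(\log\Delta)$ bits.

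To derandomize one phase in $O(\log\log\Delta)$ rounds I would apply the method of conditional expectations in blocks. The conditional expectation $\E{\Phi\mid\sigma}$ for any candidate partial seed $\sigma$ is a global sum of vertex-local quantities and hence computable in $O(1)$ CC rounds; moreover, up to $n$ candidate extensions of the currently-fixed prefix can be examined concurrently, each handled by a designated worker vertex, and the winning candidate can be selected by an $O(1)$-round global maximum. Balancing the per-vertex bandwidth (every vertex must report its per-candidate contribution and receive the eventual winner) against the number of candidates that can be handled in parallel, I would expect to fix $\Theta(\log\Delta/\log\log\Delta)$ seed bits per round, so that covering a seed of length $O(\log\Delta)$ takes $O(\log\log\Delta)$ rounds per phase.

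The main obstacle is justifying this bandwidth-respecting batch derandomization: each worker must compute a conditional expectation that depends on \emph{all} vertices, and each original vertex must communicate how its residual color list interacts with each candidate extension, all under the $O(\log n)$-bit per-link bound. The key leverage, which I expect to make this go through, is that with pairwise-independent choices each vertex's contribution to $\E{\Phi\mid\sigma}$ depends on only $O(1)$ coordinates of $\sigma$, so the per-vertex communication scales with the number of candidates rather than with the seed length, and Lenzen's routing turns ``send one $O(\log n)$-bit message per candidate'' into a constant number of rounds. Once that scheduling is in place, the outer induction, the choice of progress measure, and the constant-factor decay per phase are a direct port of the CONGEST argument behind Theorem~\ref{thm:diameterListsimple}.
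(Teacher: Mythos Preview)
Your proposal has a genuine gap in the derandomization step, and it also misidentifies where the $\log\log\Delta$ factor comes from.

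\textbf{The bandwidth obstacle.} You propose a one-shot ``try-a-random-color'' step with a pairwise-independent seed of length $O(\log\Delta)$ and then derandomize by conditional expectations. But for a vertex $v$ to compute its contribution to $\E{\Phi\mid\sigma}$ (whether $\Phi$ counts conflicts, monochromatic edges, or surviving vertices), $v$ must evaluate, for each neighbor $u$, the conditional probability that $u$ and $v$ pick the same color. Under any pairwise-independent construction that respects the lists, this probability depends on $L(u)$ (or at least on $|L(u)\cap L(v)|$), which $v$ does not know. Learning all neighbors' lists costs $\Theta(\Delta^2)$ words at $v$, which exceeds the $O(n)$ per-round budget whenever $\Delta=\omega(\sqrt{n})$. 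This is exactly the obstacle that forces the paper, already in the \CONGEST proof of Theorem~\ref{thm:diameterListsimple}, to fix colors \emph{bit by bit} rather than one-shot: after fixing one bit, the only information $v$ needs from each neighbor $u$ is the pair $(k_1(u),|L_{\ell-1}(u)|)$, a constant number of words. Your claim that ``each vertex's contribution depends on only $O(1)$ coordinates of $\sigma$'' is also false for the standard pairwise-independent families (e.g.\ $h_{a,b}(x)=ax+b$ over a field uses the entire seed to evaluate $h$ at any single point) and in any case would not bypass the list-intersection issue.

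\textbf{Where the $\log\log\Delta$ actually comes from.} In the paper's proof, one full pass of bit-by-bit fixing ($\lceil\log C\rceil$ bits) colors a constant fraction of the nodes in $O(\log C)$ rounds, because the entire $O(\log n)$-bit seed for a single color bit is derandomized in $O(1)$ rounds by assigning each of at most $n$ candidate partial seeds to a distinct clique vertex. There is no per-phase $O(\log\log\Delta)$ cost. The $\log\log\Delta$ arises \emph{adaptively}: once only $n/2^i$ vertices remain uncolored, Lenzen routing lets each remaining vertex send $2^i$ words to every neighbor, so $v$ can learn all $2^i$ prefix counts $k_s(u)$ of each neighbor and fix $i$ color bits in $O(1)$ rounds. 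Hence pass $i$ costs $O(\log C/i)$, and summing over $i=1,\dots,\log\Delta$ gives $O(\log C\cdot\sum_i 1/i)=O(\log\Delta\cdot\log\log\Delta)$. Your uniform $O(\log\Delta)\times O(\log\log\Delta)$ accounting lands on the same product by accident, but the mechanism you describe (batch-fixing $\Theta(\log\Delta/\log\log\Delta)$ seed bits per round) is both unnecessary---when conditional expectations are locally computable, all $O(\log n)$ seed bits can be fixed in $O(1)$ rounds---and insufficient, because it does not address the list-intersection bandwidth problem above.
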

The runtime of \Cref{thm:CC} is faster than the runtime of \Cref{thm:diameterListsimple} due to the following reasons.
First, due to all-to-all communication, we can avoid the diameter term in the runtime. Further, the $O(\log n)$ factor can be turned into a $O(\log\Delta)$ factor as we can send a subgraph to a single vertex of the clique and solve the problem locally as soon as the subgraph consists of at most $n/\Delta$ vertices which is achieved after $O(\log\Delta)$ rounds. In fact, this can even be achieved within $O(\log\log\Delta)$ rounds because with each phase in which a constant fraction of the nodes is colored, the routing capabilities compared to the number of remaining nodes is increased, accelerating the process of choosing a color for each node. Finally, $\Omega(\log \Delta)$ rounds of our derandomization procedure can be compressed into $O(1)$ rounds in the CONGESTED CLIQUE using the routing capabilities of the model.

We emphasize that while \Cref{thm:CC} provides the best known deterministic CONGESTED CLIQUE algorithm for $(\mathit{degree}+1)$-list-coloring, it does not improve the state of the art for the slightly weaker $(\Delta+1)$-list coloring problem, In \cite{parterCC18,parter19_arxivFix}, Parter provides a $O(\log\Delta)$-round deterministic algorithm CONGESTED CLIQUE algorithm for this problem.\footnote{A bug in the deterministic CONGESTED CLIQUE algorithm in the conference version \cite{parterCC18} has been fixed in \cite{parter19_arxivFix}.} Although \Cref{thm:CC} only improves the state of the art for a very special case, we still like to mention it, as the result almost immediately follows from our \CONGEST algorithm.

\paragraph{The MPC model \cite{KSV10,ANOY14}:} 
In the \emph{MPC model}, an input graph $G=(V,E)$ is distributed in a worst case manner among $M=\tilde{O}\left(\frac{|V|+|E|}{S}\right)$ machines, each having a memory of $S$ words. $S$ is a parameter of the model, a word consists of $O(\log |V|)$ bits and $\tilde{O}$ hides $\poly\log |V|$ factors that can be chosen arbitrarily by the designer of an algorithm. Time progresses in synchronous rounds in which machines exchange messages. In one round, every machine is allowed to send a (different) message to each other machine such that the size of all messages sent and received by a machine does not exceed its local memory. Additionally, each machine can perform an arbitrary local computation on its stored data. At the end of computation, each machine outputs a part of the solution which may not exceed its local memory. So for the coloring problem each machine is responsible for the output of some of the nodes and has to output their colors. The complexity of a deterministic MPC algorithm is the number of synchronous rounds until the problem is solved. 
We say that the MPC model works with \emph{linear memory} if $S=\tilde{O}(|V|)$ and with sublinear memory if there exists some constant $0<\alpha<1$ and $S=\tilde{O}(n^{\alpha})$.

\smallskip

The algorithm of \Cref{thm:CC} can also be implemented in the deterministic MPC model with linear memory, but without the additional improvement of turning the $\log\Delta$ into a $\log\log\Delta$ factor. However, additional care is needed to reason that none of the steps of the algorithm ever exceeds the memory of a machine, i.e., a machine never sends, receives nor stores more than $S=\tilde{O}(n)$ words. 
\begin{restatable}[MPC, linear memory]{theorem}{mpclin}\label{cor:MPClin}
There is a deterministic MPC algorithm that solves the $(\mathit{degree}+1)$-list-coloring problem in $O(\log^2 \Delta)$ rounds with linear memory.
\end{restatable}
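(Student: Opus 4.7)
The plan is to re-use the CONGESTED CLIQUE algorithm underlying \Cref{thm:CC} essentially verbatim, arguing that it can be executed in the MPC model with local memory $S=\tilde{O}(n)$, at the price of giving up the $\log\log\Delta$ speed-up and instead paying a flat $O(\log\Delta)$ rounds per phase. Recall from the discussion after \Cref{thm:CC} that the algorithm is organized into $O(\log\Delta)$ phases, each of which colors a constant fraction of the still-uncolored vertices. Hence after $O(\log\Delta)$ phases the residual instance has $n' \le n/\Delta$ vertices, hence at most $n'\Delta \le n$ edges, so it fits into the local memory of a single machine, where it can be finished off by a sequential greedy list-coloring in one additional round.

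I would then implement one phase. Each phase derandomizes a $0$-round randomized process in which every still-active vertex picks a tentative color from its list; the derandomization proceeds by fixing the $O(\log\Delta)$-bit seed of a small-support pairwise/$k$-wise independent source one bit (or small block) at a time via the method of conditional expectations, exactly as in the proofs of \Cref{thm:diameterListsimple,thm:CC}. Computing the conditional expectation of the number of newly colored vertices given the current seed prefix reduces to summing local contributions of vertices and edges. In linear-memory MPC, such a global aggregation, together with the broadcast of the chosen bit back to all machines, is a standard $O(1)$-round primitive (implementable by sort/scan over the $\tilde{O}(n+m)$ distributed words). Iterating this $O(\log\Delta)$ times yields a phase of length $O(\log\Delta)$ rounds, for a total of $O(\log^2\Delta)$ rounds over all $O(\log\Delta)$ phases.

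The main technical obstacle, as the paragraph before the theorem already flags, is memory accounting rather than new combinatorics. Specifically, I have to verify that during each aggregation and broadcast no machine ever stores, sends, or receives more than $\tilde{O}(n)$ words: the conditional-expectation statistics have total size $\tilde{O}(n+m)$ and are well distributed across the $\tilde{O}((n+m)/n)$ available machines, each broadcast of a single seed block contributes only $O(1)$ words per machine, and the final gather step is afforded exactly by the $\le n$ edge bound on the residual subgraph. This bookkeeping also pinpoints why the $\log\log\Delta$ improvement from \Cref{thm:CC} cannot be recovered here: that improvement exploits the fact that in the CONGESTED CLIQUE the per-node bandwidth allotted to each still-active vertex grows as the active set shrinks, whereas in linear-memory MPC the per-machine memory (and thus the per-machine bandwidth) is fixed at $\tilde{O}(n)$, so there is no analogous compression of $\Omega(\log\Delta)$ derandomization rounds into $O(1)$ rounds.
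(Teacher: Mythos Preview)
Your high-level outline---$O(\log\Delta)$ outer phases each coloring a constant fraction, then gathering the residual on one machine---matches the paper. However, your description of what happens \emph{inside} a phase does not match the proofs you cite and, taken literally, has a gap.

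You say a phase derandomizes a single $0$-round process in which every vertex picks a full tentative color from its list, using one $O(\log\Delta)$-bit seed fixed bit by bit. But neither \Cref{thm:diameterListsimple} nor \Cref{thm:CC} derandomizes the direct color choice. As explained in \Cref{ssec:nutshell}, computing the conditional expectation of the conflict count for the direct process requires the machine responsible for $u$ to know $L(v)$ for every neighbor $v$; shipping all neighbor lists costs $\Theta(m\Delta)$ words globally, which does not fit in $\tilde{O}(m+n)$ total memory. This is precisely why \Cref{lem:mpcfast}, which \emph{does} derandomize the direct choice, needs the extra assumption of $\Omega(n\Delta^2)$ global memory and is only invoked after the instance has shrunk. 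Your sentence ``computing the conditional expectation \dots\ reduces to summing local contributions of vertices and edges'' hides this: the local contribution of an edge $\{u,v\}$ in the direct process depends on $L(u)\cap L(v)$, not on $O(1)$ scalars.

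The paper's proof of \Cref{cor:MPClin} instead keeps the bit-by-bit prefix extension of \Cref{thm:colorhalf}: a phase consists of $\log C$ separate one-bit extensions, each with its \emph{own} seed of length $O(\log n)$ (node IDs are used as the input coloring, so no Linial step). The $O(\log\Delta)$ rounds per phase arise as $\log C \times O(1)$, not as $O(\log\Delta)$ seed bits times $O(1)$. The $O(1)$ rounds per color bit come from a trick absent from your proposal: the $O(\log n)$-bit seed is split into $O(1)$ segments of length $\lambda\le(\log n)/2$; for each segment every machine computes a length-$2^\lambda\le\sqrt{n}$ vector of conditional potentials (one entry per candidate value of the segment), and these vectors are summed over a degree-$\sqrt{n}$, depth-$O(1)$ aggregation tree. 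This is exactly the memory accounting you allude to, and it is what lets $\Omega(\log n)$ seed bits be fixed per round without exceeding $\tilde{O}(n)$ per-machine bandwidth. Without this piece, fixing the seed one bit at a time on top of the $\log C$ color bits would give $O(\log^3\Delta)$ rather than $O(\log^2\Delta)$.
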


For the sublinear memory regime we obtain the following result. The basic algorithm is the same as the one for the linear memory regime. Since in the sublinear memory regime, the neighborhood of a single node might not fit on a single machine, we have to use and adapt some standard MPC for handling operations like aggregation of a function over a single neighborhood or updating the color lists of all nodes after a partial coloring step (e.g., the last problem corresponds to designing an MPC algorithm to compute the intersection between two sets of size $N$, if machines with a local memory of $N^{\eps}$ are available). These basic MPC computations appear in \Cref{sec:MPCbasics}.

\begin{restatable}[MPC, sublinear memory]{theorem}{mpcsub}\label{cor:MPCsub}
There is a deterministic MPC algorithm that solves the $(\mathit{degree}+1)$-list-coloring problem in  $O(\log^2 \Delta+\log n)$ rounds with sublinear memory.
\end{restatable}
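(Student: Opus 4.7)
The plan is to follow exactly the strategy used for the linear-memory case (\Cref{cor:MPClin}), but to re-implement every non-local primitive using the sublinear-memory MPC building blocks collected in \Cref{sec:MPCbasics}. At the top level the algorithm still consists of $O(\log \Delta)$ phases, each coloring a constant fraction of the still-uncolored vertices, and each phase derandomizes a $0$-round process in $O(\log \Delta)$ conditional-expectation steps. If every individual step can be carried out in $O(1)$ sublinear-memory rounds, this already contributes the $O(\log^2 \Delta)$ term in the round complexity.

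A single derandomization step requires every vertex to aggregate a pessimistic-estimator contribution over its current neighborhood and over its color list, and then to broadcast an updated decision back to all neighbors. Since a vertex of degree up to $\Delta$ may have its neighborhood spread across many machines when $S=\tilde{O}(n^\alpha)$, both aggregation and broadcast have to be performed hierarchically, via sorting edges by endpoint followed by segmented prefix sums, rather than by routing everything to a single machine per vertex. This is exactly the aggregation-over-neighborhoods primitive recorded in \Cref{sec:MPCbasics}. The corresponding update of $L(v)$ after a partial coloring step amounts to an intersection of two sets of size $\poly\Delta$ on machines of memory $n^\alpha$, which fits the same template and can likewise be executed in $O(1)$ rounds.

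After these $O(\log^2 \Delta)$ rounds only an arbitrarily small polynomial fraction of vertices remains, but the residual graph may not yet be globally small; before we can gather each remaining connected piece onto a single machine and finish off by local sequential greedy coloring, we need a connected-components computation in the sublinear MPC model. Invoking the standard sublinear-memory connectivity primitive stated in \Cref{sec:MPCbasics} contributes the additive $O(\log n)$ term and completes the algorithm.

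The main obstacle I expect is the memory bookkeeping inside the inner derandomization loop: I need to verify that throughout the $O(\log \Delta)$ conditional-expectation steps of a single phase, all color lists, all pessimistic-estimator intermediate values, and all routing buffers together fit into $\tilde O(n^\alpha)$ memory per machine, even when some vertices still have degree close to $\Delta$. This should follow because the total data volume is $O(|E|+\sum_v |L(v)|)=O(|E|+n\Delta)$ and the sort- and prefix-sum-based primitives spread the load evenly across machines, but it has to be checked carefully for each primitive used inside the loop and in particular it has to be checked that no vertex ever accumulates more than $\tilde{O}(n^\alpha)$ incoming messages in any round.
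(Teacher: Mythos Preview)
Your outline of the main body---$O(\log\Delta)$ phases, each fixing the $O(\log C)=O(\log\Delta)$ prefix bits one at a time, and each bit-fixing done in $O(1)$ rounds by splitting the $O(\log n)$-bit seed into $O(1)$ chunks and aggregating conditional expectations via the edge-based potential and the aggregation-tree primitive---matches the paper and is fine.

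The finishing step, however, is a genuine gap. First, \Cref{sec:MPCbasics} contains only sorting, prefix sums, set difference, and aggregation trees; there is no connectivity primitive there, so the appeal to ``the standard sublinear-memory connectivity primitive stated in \Cref{sec:MPCbasics}'' is empty. Second, and more seriously, even granting an $O(\log n)$-round deterministic connectivity subroutine, the plan to ``gather each remaining connected piece onto a single machine and finish off by local sequential greedy coloring'' does not work in general. After $O(\log\Delta)$ phases you have $n/\Delta^{O(1)}$ uncolored vertices, but when $\Delta$ is small (say $\Delta=\polylog n$) this is still essentially all of $V$, and a single connected component of the residual graph can have size far exceeding $S=n^{\alpha}$. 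Nothing in your argument bounds component sizes.

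The paper handles the residual instance differently: it case-splits on whether $\Delta>n^{\alpha/2}$. In the large-$\Delta$ case $O(\log\Delta)=\Omega(\log n)$ and one simply runs $O(\log n)$ phases until everything is colored. In the small-$\Delta$ case, after $O(\log\Delta)$ phases only $n/\Delta^2$ vertices remain, and now (\Cref{lem:mpcfast}) every edge can afford to learn \emph{both} endpoints' full color lists (each of size $O(\Delta)\le n^{\alpha/2}$), so one can derandomize the process of picking an entire color at once rather than bit by bit; this colors a constant fraction per $O(1)$ rounds and finishes in $O(\log n)$ rounds. You should replace your connectivity-and-gather step with this argument.
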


We remark that in independent work, Czumaj, Davies, and Parter~\cite{CDP19_MPCcoloring} have developed another deterministic $(\mathit{degree}+1)$-list coloring algorithm for the sublinear memory MPC model. The algorithm of \cite{CDP19_MPCcoloring} achieves a time complexity of  $O(\log\Delta+\log\log n)$ and it thus is significantly faster than our algorithm. In order to achieve this faster running time, the authors of \cite{CDP19_MPCcoloring} have to make the assumption that even if the graph is sparse and consists of only $n^{1+o(1)}$ or even $\tilde{O}(n)$ edges, the MPC algorithm needs at total memory of $\Omega(n^{1+\alpha})$ for some constant $\alpha>0$. In our results, the total memory is of order $\tilde{O}(m+n)$, if $m$ denotes the number of edges of $G$. We also believe that our algorithm is of interest because in our opinion, both our algorithm and the analysis are substantially simpler than the algorithm and analysis of \cite{CDP19_MPCcoloring}.

\subsection{Related Work}
We continue with a short overview on the most important distributed graph coloring algorithms in the \LOCAL and \CONGEST model. For a more detailed discussion of previous work on distributed graph coloring we refer to \cite{barenboimelkin_book,KuhnSoda20} for deterministic algorithms and \cite{chang18_coloring} for randomized algorithms. 

\paragraph{Deterministic Distributed Coloring:} 
The work on distributed coloring started more than 30 years ago with work on several symmetry breaking problems in the parallel setting \cite{alon86,cole86,goldberg88,luby86} and a seminal paper by Linial \cite{linial92} that introduced the \LOCAL model for solving graph problems in a distributed setting. Linial showed that a graph can be colored (deterministically) in $O(\log^* N)$ rounds with $O(\Delta^2)$ colors and that $\Omega(\log^* N)$ rounds are necessary even for graphs of degree $2$ where $N$ is the size of the identifier space.\footnote{The function $\logstar x$ denotes the number of iterated logarithms needed to obtain a value at most $1$, that is, $\forall x\leq 1: \logstar x=0,\ \forall x>1: \logstar x = 1 + \logstar\log x$. As most results in the area assume that the identifier space $N$ is of size $\poly(n)$ Linial's lower bound and upper bounds are usually used as $\Omega(\logstar n)$ and $O(\logstar n)$; we do the same.}
 Even though these algorithms and lower bounds were devised for the \LOCAL model, all of them also directly apply to the \CONGEST model.  If randomization is allowed, already the early works imply that a $(\Delta+1)$-coloring can always be computed in $O(\log n)$ rounds, even in the \CONGEST model~\cite{alon86,luby86,linial92,johansson99}.

The focus of our current work is to express the complexity of distributed graph coloring in the \CONGEST model as a function of $n$, i.e, the number of nodes of the network.   Despite ample attention to the problem in the \LOCAL model, the $2^{O(\sqrt{\log n})}$-time solution of \cite{panconesi95} has been the state of the art until the recent breakthrough by Rozho\v{n} and Ghaffari~\cite{RG19}, which provided the first $\polylog n$-time algorithm.   However, to the best of our knowledge and somewhat surprisingly there are no published results on the $(\Delta+1)$-coloring problem in the \CONGEST model, when runtime is expressed solely as a function of $n$. But there has been extensive
work on determining the problem's round complexity in terms of the maximum degree $\Delta$ and most of these algorithms also work in \CONGEST: A  simple single-round color elimination scheme combined with Linial's $O(\Delta^2)$-coloring algorithm provided an $O(\Delta\log \Delta+\logstar n)$ algorithm \cite{szegedy93,Kuhn2006On}. This was improved to $O(\Delta+\logstar n)$ rounds by using a divide-and-conquer approach based on defective colorings \cite{barenboim09,spaa09}. 
The first and only sublinear in $\Delta$ algorithm in the for $(\Delta+1)$-coloring was obtained by Barenboim and used $O(\Delta^{3/4}+\logstar n)$ rounds~\cite{barenboim15,barenboim18}. 
The current state of the art in the \LOCAL model uses $\tilde{O}(\sqrt{\Delta}+\logstar n)$ rounds, but it does not extend to \CONGEST \cite{fraigniaud16,barenboim18}. 
 There are faster algorithms if the final coloring can use more than $\Delta+1$ colors: \cite{barenboim10} shows that coloring with $O(\Delta^{1+\eps})\gg\Delta+1$ colors for some constant $\eps>0$ can be done in $O(\log \Delta\log n)$ rounds. If one desires to reduce the allowed communication from $O(\log n)$ bits per round to a single bit per round, Barenboim, Elkin and Goldenberg provided algorithm  that uses $O(\Delta+\log n+ \logstar n)$ rounds and colors with $O(\Delta)$ colors \cite{barenboim18}. In its original version the  algorithm computes an $O(\Delta)$-coloring in $O(\Delta+\logstar n)$ rounds in the standard \CONGEST model.

While there has been extensive progress on upper bounds, the original $\Omega(\logstar n)$ lower bound by Linial is still the best known lower bound on the $(\Delta+1)$-coloring problem. It was recently shown that in a weak
version of the \LOCAL model (called the \SETLOCAL model), the
$(\Delta+1)$-coloring problem has a lower bound of
$\Omega(\Delta^{1/3})$ \cite{disc16_coloring}.

\paragraph{Randomized Distributed Coloring:} There has also been
a lot of work on understanding the randomized complexity of the
distributed coloring problem
\cite{kothapalli06,SchneiderW10,barenboim12,PettieS13,HarrisSS16,podc18_Deltacoloring,chang18_coloring}. Most of the work focuses on the \LOCAL model and a 
particularly important contribution was provided by Barenboim, Elkin,
Pettie, and Schneider \cite{barenboim12}, who introduced the so-called
graph shattering technique to the theory of distributed graph
algorithms. The paper shows that a $(\Delta+1)$-coloring can be
computed in time $O(\log\Delta) + 2^{O(\sqrt{\log\log n})}$ in the \LOCAL model. 
\cite{GhaffariSODA19} showed that the same can be done in the \CONGEST model and in both algorithms the $2^{O(\sqrt{\log\log n})}$ term can most likely be reduced with the result from \cite{RG19}.
The state of the art for randomized $(\Delta+1)$-coloring in the \LOCAL model is the graph shattering based $\poly\log\log n$ algorithm by Chang et al.~\cite{chang18_coloring,RG19} that does not translate to the \CONGEST model. 

\paragraph{$(\Delta+1)$-Coloring in the CONGESTED CLIQUE and MPC:}
In the special case of $\Delta =O(n^{1/3})$ the MIS algorithm in \cite{CPS17} combined with a well-known reduction from the $(\Delta+1)$-coloring problem to the MIS problem \cite{luby86,linial92} can be used to deterministically compute a $(\Delta + 1)$-coloring in $O(\log \Delta)$  rounds. 
Building up on the result from \cite{CPS17} Parter provided an $O(\log \Delta)$ deterministic algorithm \cite{parterCC18}.
All other results in the CONGESTED CLIQUE and the MPC model are randomized: \cite{CFGUZ19} provides a $O(1)$-round algorithm in the CONGESTED CLIQUE and a $O(\sqrt{\log\log n}$) round randomized algorithm for the MPC model with sublinear memory. 
Previously Assadi et al.~\cite{assadi19} provided an $O(1)$ randomized algorithm in the MPC model in the linear memory regime, i.e., with memory $\tilde{O}(n)$. 

\paragraph{Derandmization in the \CONGEST Model:}
As pointed out earlier \cite{CPS17,DISC18_DomSet,DKM19} use a similar derandomization strategy for other graph problems. \cite{CPS17} computes an MIS in $\tilde{O}(D)$ rounds and uses the same strategy to obtain deterministic algorithms for spanners. \cite{DISC18_DomSet} computes a $O(\log^2 n)$ approximation for the minimum dominating set problems and \cite{DKM19} computes a $(1+\eps)\log \Delta$-approximation for the minimum dominating set problem for any constant $\eps>0$.  We cite \cite{CPS17}:
\textit{
\begin{center}
This work  opens a window to many additional intriguing questions. First, we would like to see many more local problems being derandomized despite congestion restrictions.
\end{center}
}
Despite the fact that \cite{CPS17,DISC18_DomSet,DKM19} and the current paper use the same derandomization strategy, applying it to further graph problems is unfortunately non-black-box.
\smallskip

Kawarabayashi and Schwartzman \cite{schwartzmanAdapting18} also use the concept of derandomization in the \CONGEST model. Their approach relies on iterating through the color classes of a given coloring; by only considering a suitably chosen subgraphs and defective colorings  they obtain several algorithms for various cut problems including a deterministic $O(\eps^{-2}\log \Delta+\logstar n)$ round algorithm for an $(1/2-\eps)$-approximation of the \emph{max cut problem}.

\subsection{Our Derandomization in a Nutshell} 
\label{ssec:nutshell}

\paragraph{General derandomization strategy.} We use the general derandomization strategy as recently introduced by Censor-Hillel, Parter, and Schwartzman in \cite{CPS17} and afterwards also used in \cite{DISC18_DomSet,DKM19}. We start with a simple and efficient randomized algorithm that is guaranteed to make good progress in expectation. By using the method of conditional expectations (see e.g., \cite[Chapter 6.3]{Mitzenmacher05}), we then turn this randomized algorithm into a deterministic \CONGEST algorithm with a time complexity of $D\cdot\polylog n$. Finally, in combination with the new polylogarithmic-time network decomposition algorithm of \cite{RG19}, the running time of the algorithm can be reduced from $D\cdot\polylog n$ to only $\polylog n$.

A bit more specifically, assume that we have a random variable $X_v$ that measures the progress of each node $v\in V$ in our given randomized algorithm such that $X:=\sum_{v\in V} X_v$ measures the global progress of the algorithm. Assume for example that a small value of $X$ implies fast progress, whereas a large $X$ implies slow progress (in the context of coloring $X$ could for example measure the number of conflicts in a single random coloring step). Assume further that we can reasonably upper bound the expected value $\E{X}$ of $X$ even if the random choices of the nodes in the randomized algorithm are only $k$-wise independent for some $k=\polylog n$ (i.e., for any subset of $k$ nodes, the choices are independent). It is well-known that a set of $n$ $\polylog n$-wise independent random bits can be generated from only $\polylog n$ independent random bits~\cite{Vadhan12}. The algorithm can therefore be implemented with a shared random seed of $\polylog n$ bits and we can use the method of conditional expectations to \emph{deterministically} find an assignment to these $\polylog n$ shared bits for which the value of $X$ is not larger than its expectation. When implementing the algorithm in the \CONGEST model, we deterministically fix the $\polylog n$ bits of the random seed one-by-one, where fixing a single bit involves a global aggregation for computing the conditional expectations of $\E{X}$ based on setting the bit to $0$ or $1$. One bit of the random seed can therefore be fixed in $O(D)$ time in the \CONGEST model.

\paragraph{Computing a partial coloring.} 
The arguably most natural distributed random coloring algorithm is to let each node choose a color from its list uniformly at random \cite{johansson99}. For the expected number of conflicts $X_v$ of a node $v$ we have
\begin{equation}\label{eq:potentialtry1}
\E{X_v} = \sum_{u\in \nh{v}} \frac{|L(u)\cap L(v)|}{|L(u)|\cdot|L(v)|} \leq \sum_{u\in\nh{v}}\frac{|L(u)|}{|L(u)|\cdot|L(v)|}=\deg(v)\cdot \frac{1}{|L(v)|} < 1~.
\end{equation}
Note that for \eqref{eq:potentialtry1} to hold, we need only pairwise independent choices of the colors. It follows that the expected number of conflicts is $\sum_{v\in V}\E{X_v}<n$. The derandomization by the method of conditional expectations would then choose a color for each node such that $\sum_{v\in V}X_v\leq\sum_{v\in V}\E{X_v}<n$, i.e., the number of conflicts is upper bounded by its expectation which is at most $n$. It follows that at least half of the nodes have $X_v\leq1$, i.e., at most one neighbor that has chosen the same color and thus at least half of these nodes can keep their color. The whole process can then be repeated for $O(\log n)$ iterations to color all vertices. However, in order to compute the conditional expectation  of $X_v$, $v$ needs to know the lists of its neighbors which is too expensive to acquire in the \CONGEST model. 
\paragraph{Computing Conditional Expectations.} Therefore we choose another approach which is inspired by the algorithm of \cite{KuhnSoda20}: Let the global color space be $[C]$ for some integer $C$ (i.e., for each $v\in V$, we have $L(v)\subseteq [C]$). Each color thus has a binary representation of $\log C$ bits. We define a process where we fix each node's color bit-by-bit, i.e., we run $\log C$ phases where in each phase, each node fixes the next bit of its color. Each time a node $v$ has fixed a bit, its list $L(v)$ is reduced to the subset of colors which have the sequence of bits that we have fixed so far as prefix and as soon as all bits have been fixed, each node has picked a candidate color. We show that if the color prefixes are extended with the correct probabilities, this process colors a constant fraction of the vertices in expectation. One can also view it as a slowed down version of directly taking a color from the list uniformly at random. Now we apply the aforementioned derandomization technique only to the zero-round algorithm of choosing a single bit of the color prefix. The benefit of this approach is that it admits a pessimistic estimator of \Cref{eq:potentialtry1} that allows the efficient computation of conditional expectations.

\paragraph{Shorter Random Seeds.} In \cite{CPS17,DISC18_DomSet,DKM19} the length of the shared random seed is $\polylog n$. As the seed length appears as a factor in the runtime of the resulting deterministic algorithms, one wishes to keep it as short as possible. In our algorithms we manage to reduce the seed length to $O(\log \Delta+\log\log C)$ bits; in particular the seed length is independent of $n$. The main ingredient for a shorter seed length is the observation that pairwise independent random coins for adjacent nodes are sufficient in our randomized algorithms and these coins can be produced from a random seed whose length does not depend on $n$. This observation might be helpful for derandomizing other algorithms.


\section{Degree+1 List Coloring in Diameter Time}
\label{sec:diameterList}
Throughout, let $D$ denote the diameter of a graph. Often we run algorithms on subgraphs of a graph; however, all our algorithms can be implemented such that $D$ always refers to the diameter of the original communication graph. For $C\in\mathbb{N}$ we introduce the notation $[C]:=\{0,\dots,C-1\}$. 
Given a graph $G=(V,E)$, a \emph{color space} $[C]$ and \emph{color lists} $L(v)\subseteq [C]$ for each $v\in V$, a \emph{list-coloring} $\phi:V\to [C]$ of $G$ is a proper $C$-coloring such that each node is colored with a color from its list, i.e, for any edge $\{u,v\}\in E$ we have $\phi(u)\neq\phi(v)$ and for all $v\in V$ we have $\phi(v)\in L(v)$. In this paper we consider the $(\mathit{degree}+1)$-list-coloring problem, where $|L(v)|=\deg(v)+1$ for each $v\in V$.

\smallskip

As our main technical contribution, we show that we can list-color a constant fraction of the nodes of a graph.
\begin{lemma}\label{thm:colorhalf}
There is a deterministic \CONGEST algorithm that given a list-coloring instance $G=(V,E)$ with color space $[C]$, lists $L(v)\subseteq[C]$ for which $|L(v)|\geq\deg(v)+1$ holds for all $v\in V$ and an initial $\initC$-coloring of $G$, list-colors a $1/8$ fraction of the nodes in $\bigO{D\cdot \log C\cdot (\log\Delta+\log \initC+\log\log C)}$ rounds.

When the result is applied to a subgraph of a communication graph $G$, $\deg(v)$ refers to the degree of $v$ in the subgraph and $\Delta$ to the maximum degree of the subgraph, but the diameter $D$ refers to the diameter of $G$.
\end{lemma}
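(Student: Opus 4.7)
The plan is to specialize the derandomization template of Section~1.4 to a bit-by-bit color-selection process. For $i\in\{1,\dots,\log C\}$, phase $i$ is the zero-round subroutine in which every still uncolored node $v$ samples one fresh bit $b_v^{(i)}$ and appends it to its color prefix, simultaneously shrinking its working list to
\[
L_i(v)\ :=\ \bigl\{c\in L_{i-1}(v):\text{the $i$-th bit of $c$ equals }b_v^{(i)}\bigr\}.
\]
After $\log C$ phases each node holds a single candidate color; it keeps this color iff no neighbor holds the same candidate (ties broken by the given $\initC$-coloring or IDs). I would control the global progress with the pessimistic estimator
\[
\Phi_i\ :=\ \sum_{\{u,v\}\in E}\frac{\bigl|L_i(u)\cap L_i(v)\bigr|}{|L_i(u)|\cdot|L_i(v)|},
\]
which upper-bounds the expected number of final monochromatic edges conditioned on the bits fixed so far. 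From $|L(v)|\ge\deg(v)+1$ together with $|L(u)\cap L(v)|\le\min(|L(u)|,|L(v)|)$ one obtains $\Phi_0<n/2$, and I would verify that when phase $i$'s bits are drawn uniformly and pairwise independently across adjacent nodes, $\E{\Phi_i\mid\Phi_{i-1}}\le\Phi_{i-1}$. Derandomizing every phase so that $\Phi$ never increases keeps $\Phi_{\log C}<n/2$; since each monochromatic edge can be resolved by uncoloring only its lower-ID endpoint, at least an $n/8$-fraction (in fact more) of the nodes keep a valid color.

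The short-seed derandomization inside a single phase proceeds as follows. The $n$ pairwise-independent bits $\{b_v^{(i)}\}_{v\in V}$ are produced from a shared seed of length $s=O(\log\initC+\log\Delta+\log\log C)$ by feeding each node's color in the given initial $\initC$-coloring into a standard pairwise-independent hash family; the additive $\log\Delta+\log\log C$ summand absorbs a per-phase re-indexing of the hash (so that bits across the $\log C$ phases stay jointly usable) and a slight widening of the hash range. Because adjacent nodes receive distinct inputs, their output bits are pairwise independent, which is all the estimator analysis needs. The seed is then fixed one bit at a time by the method of conditional expectations: for each tentative value of the next seed bit the network aggregates $\E{\Phi_i\mid\text{seed prefix}}$ along a pre-computed BFS tree of the communication graph in $O(D)$ rounds and keeps whichever value does not increase $\Phi_i$. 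The key observation enabling this aggregation is that each edge's contribution to $\Phi_i$ depends only on the two endpoints' current prefixes and their original lists, so a single exchange of the $O(\log C)$-bit prefix across each edge lets every edge locally produce its summand.

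Combining the pieces yields the claimed bound: $\log C$ phases, each fixing $s=O(\log\initC+\log\Delta+\log\log C)$ seed bits, each bit costing one $O(D)$-round aggregation, sum to $O\bigl(D\cdot\log C\cdot(\log\initC+\log\Delta+\log\log C)\bigr)$ rounds. Because the statement requires $D$ to always refer to the diameter of the original communication graph even when the lemma is applied to a subgraph, the BFS tree is computed once at the outset on the full graph and reused by every aggregation. The main obstacle I foresee is the estimator analysis: establishing $\E{\Phi_i\mid\Phi_{i-1}}\le\Phi_{i-1}$ using only \emph{pairwise} (and not full) independence of the adjacent bits, and designing a hash family whose $s$-bit seed simultaneously secures per-phase pairwise independence across neighbors and enough cross-phase independence without re-introducing an $n$-dependent seed. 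Once those ingredients are in place, the rest follows the standard method-of-conditional-expectations machinery of \cite{CPS17,DISC18_DomSet,DKM19}.
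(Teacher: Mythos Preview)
Your plan has two genuine gaps, and both concern precisely the places where the paper departs from the ``obvious'' approach.

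\textbf{The estimator cannot be aggregated in \CONGEST.} Your potential
\[
\Phi_i=\sum_{\{u,v\}\in E}\frac{|L_i(u)\cap L_i(v)|}{|L_i(u)|\cdot|L_i(v)|}
\]
requires, for each edge, the quantity $|L_i(u)\cap L_i(v)|$. Exchanging the current $O(\log C)$-bit prefix does \emph{not} let an endpoint compute this intersection size; it needs the neighbor's full list, which has up to $\Delta+1$ entries and costs $\Omega(\Delta)$ \CONGEST rounds to transmit. The paper explicitly identifies this as the obstacle (Section~1.4) and for exactly this reason replaces the intersection-based estimator by the coarser potential $\Phi_\ell(u)=\deg_\ell(u)/|L_\ell(u)|$, where $\deg_\ell(u)$ counts neighbors with the \emph{same prefix}. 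That quantity is computable after a single exchange of prefixes, and its conditional expectation under a partially fixed seed requires from each neighbor only the pair $(k_1(v),|L_{\ell-1}(v)|)$, which is $O(\log C)$ bits.

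\textbf{Uniform bits do not work.} You propose to draw each phase's bit uniformly. First, this can empty a list (if all remaining colors of $v$ have the same next bit), making $\Phi_i$ undefined. Second, even ignoring that, the martingale claim $\E{\Phi_i\mid\Phi_{i-1}}\le\Phi_{i-1}$ is false for uniform bits: take $|L^0(u)|=|L^0(v)|=1$, $|L^1(u)|=|L^1(v)|=3$ with the lists arranged so that $|L^0(u)\cap L^0(v)|=1$ and $|L^1(u)\cap L^1(v)|=3$; the old edge term is $4/16=1/4$, while the expected new term under uniform independent bits is $\tfrac14(1+\tfrac13)=1/3>1/4$. The paper instead uses \emph{biased} coins, bit $1$ with probability $p_u=k_1(u)/|L_{\ell-1}(u)|$; with these biases (and pairwise independence across neighbors) the potential is exactly non-increasing in expectation (Lemma~\ref{lem:expectation}). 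The biased coins are what force the $\log\Delta+\log\log C$ term in the seed length: one needs hash outputs of $b=\Theta(\log(\Delta\log C))$ bits to approximate each $p_u$ to accuracy $\varepsilon=\Theta(1/(\Delta\log C))$ so that the cumulative error over $\log C$ phases stays below $n$ (Lemma~\ref{lem:expectslack}). Your explanation of the seed length via ``cross-phase independence'' and ``re-indexing'' is off target; each phase is derandomized separately from a fresh seed, and no cross-phase randomness is needed.

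Once you switch to the paper's potential and biased coins, the remaining pieces of your plan (pairwise independence via the $K$-coloring, per-bit aggregation over a BFS tree, final cleanup on a bounded-degree conflict graph) line up with the paper's proof.
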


Before we explain the algorithm of \Cref{thm:colorhalf} we show that it directly implies the following theorem.

\medskip

\noindent\textbf{\Cref{thm:diameterListsimple}}
\textit{There is a deterministic \CONGEST algorithm that solves the list-coloring problem for instances $G=(V,E)$ with $L(v)\subseteq[C]$ and $|L(v)|\geq\deg(v)+1$ in time $\bigO{D\cdot\log n\cdot\log C\cdot(\log\Delta+\log\log C)}$.}

\begin{proof}
First we use Linial's coloring algorithm (\cite{linial92}) to compute a $\initC=O(\Delta^2)$ coloring of the graph in $O(\logstar n)$ rounds. This coloring is used as the input coloring to $O(\log n)$ iterations of \Cref{thm:colorhalf}. In each iteration $i=0,\dots,O(\log n)$ we color a constant fraction of the still uncolored vertices and after the iteration each still uncolored vertex removes the colors of newly colored neighbors from its list. As a node only removes a color from its list if also its `uncolored degree' drops, the residual graph $G_i$ that is induced by the still uncolored nodes after iteration $i$ forms a feasible instance for \Cref{thm:colorhalf}, i.e., it satisfies the condition $|L(v)|\geq\deg_{G_i}(v)+1$ for all $v\in V(G_i)$. Thus after $O(\log n)$ iterations each vertex is colored with a color from its initial list and the total runtime is 
\begin{align*}
\bigO{D\cdot\log n\cdot\log C\cdot(\log\Delta+\log \initC+ \log\log C)}=\bigO{D\cdot\log n\cdot\log C\cdot(\log\Delta+\log\log C)} & \qedhere
\end{align*}
\end{proof}
In the introduction we presented a simplified version of \Cref{thm:diameterListsimple} by setting $C=\poly\Delta$~.

\begin{remark}
When applying \Cref{thm:diameterListsimple} to disconnected subgraphs the diameter term in the runtime can be substituted by the maximum diameter of the connected components.
\end{remark}

We now continue by explaining the ideas of the algorithm of \Cref{thm:colorhalf}; the formal proof of \Cref{thm:colorhalf} appears at the very end of \Cref{sec:diameterList}. Note that the given initial $\initC$-coloring is only used for symmetry breaking purposes and does not relate to the lists of the nodes.

\paragraph{General Algorithmic Idea:} In the algorithm of \Cref{thm:colorhalf}, every node deterministically (and tentatively) selects a color from its list such that a constant fraction of nodes can permanently keep their selected color. We now describe this selection process in more detail: Given a list-coloring instance $G=(V,E)$ with color space $[C]$, each color is represented by a bitstring of length $\lceil\log C\rceil$ (e.g., color $2$ is represented as the bit string $0\dots010$ and not just as the bit string $10$). Our algorithm operates in $\lceil\log C\rceil$ phases and in each phase we determine one further bit of each node's color. That is, each $u\in V$ maintains a bitstring $s(u)$ with the property that $s(u)$ is the prefix of some color(s) in $L(u)$, starting with $s(u)=s_0(u)$ being the empty string and successively extending the prefix $s(u)$ by one bit per phase. We write $s_{\ell}(u)$ for the bitstring that node $u$ has chosen after phase $\ell$. The string $s_{\lceil\log C\rceil}(u)$ corresponds to the tentative color that $u$ selects.
For $\ell=1,\ldots,\lceil\log C\rceil$ we define
\[L_{\ell}(u):=\{x\in L(u) \mid \text{ prefix of $x$ is $s_{\ell}(u)$}\}\]
 as the set of $u$'s \emph{candidate colors} at the end of phase $\ell$,  i.e, the set of colors in $L(u)$ that start with $s_{\ell}(u)$. We note that our algorithm chooses all prefixes $s_{\ell}(u)$ in such a way that $L_{\ell}(u)$ is always non-empty for all $\ell$ and $u\in V$.
We further define the remaining \emph{conflict graph} after iteration $\ell$ as
\begin{align*}G_{\ell}:=(V,E_{\ell}), \text{ where } E_{\ell}:=\{\{u,v\}\in E\mid s_{\ell}(u)=s_{\ell}(v)\}~,\end{align*}
and the remaining \emph{conflict degree} $\deg_{\ell}(u):=\deg_{G_{\ell}}(u)$. That is, in graph $G_{\ell}$ any edge of $G$ is considered as deleted as soon as its nodes choose different prefixes.
To ensure that a constant fraction of the vertices can keep their selected color we need a suitable measure of progress for the extension of prefixes, that is, we need a potential function that captures the \emph{'usefulness'} of the prefixes for our purpose. Informally, the potential relates (but does not equal!) to the expected number of monochromatic edges if each node chooses a random candidate color from its list, i.e., a random color that is consistent with its current prefix. We define the following \emph{\textbf{potential function}} 

\[\Phi_{\ell}(u):=\frac{\deg_{\ell}(u)}{|L_{\ell}(u)|}~.\]
At the beginning, when all prefixes are empty, we have $\Phi_0(u)<1$ for all $u\in V$ and thus $\sum_{u\in V}\Phi_0(u)<n$. We will give an algorithm that extends all prefixes bitwise while keeping the overall increase of the potential small such that when all $\ell=\lceil\log C\rceil$ bits are fixed, we still have $\sum_{u\in V}\Phi_{\ell}(u)\leq 2n$. It follows that a constant fraction of the nodes $u$ have $\Phi_{\ell}(u)<4$.  At this stage all nodes have selected a single candidate color, i.e., $|L_{\ell}(u)|=1$ for all $u\in V$. Thus $\Phi_{\ell}(u)$ equals the number of neighbors $v$ of $u$ with the same candidate color. Hence a constant fraction of the vertices has at most $3$ neighbors that have the same selected color which is sufficient to permanently color a constant fraction of the vertices of $G$.

\medskip

In \Cref{sec:randomized} we show that using biased coin flips to determine the next bit of the prefix yields a $0$-round randomized algorithm that, \textit{in expectation}, has no increase in the potential even if the nodes only use pairwise independent coin flips (\Cref{lem:expectation}). In the same section we show that the expected increase of the potential is small if the probabilities of the coin flips are chosen slightly inaccurate (\Cref{lem:expectslack}). 

In \Cref{sec:derand} we show that such biased coin flips can be deterministically produced from a short random seed (\Cref{lem:coinProd}). Thus, over the randomness of the seed, the expected potential increase is small. In the following we use the method of conditional expectation and a BFS tree to derandomize this algorithm, i.e., we \emph{deterministically} pick a \emph{good} random seed that incurs only a small increase in the potential (\Cref{lem:derand}).

\medskip

\subsection{Extending Prefixes: Zero Round Randomized Algorithms}\label{sec:randomized}
To fix bit $\ell$ of all prefixes (that we describe by \Cref{alg:randbit}) each node flips a coin to determine its $\ell$-th bit. The bit and also the coin equals $1$ with probability $p_u:=\frac{k_1(u)}{|L_{\ell-1}(u)|}$ where $L_{\ell-1}(u)$ is the list of current candidate colors and  $k_1(u):=|\{x\in L_{\ell-1}(u) \mid x[\ell]=1\}|$ is the number of candidate colors whose $\ell$-th bit equals $1$. 
Thus $p_u$ is the fraction of candidate colors whose $\ell$-th bit equals $1$. This process can be seen as a slowed down version of selecting a color from the initial list uniformly at random as iterating this process for $\lceil\log C\rceil$ times yields the same probability for each color to be selected. However, we do not know how to immediately derandomize the non slowed down process.

\begin{algorithm}[H]
\caption{Randomized One Bit Prefix Extension}
\begin{algorithmic}[0]
\State Input: Bitstring $s_{\ell-1}(u)$ of length $\ell-1$ for all $u\in V$
\For{each node $u$ in parallel}
\State $p_u:=\frac{k_1(u)}{|L_{\ell-1}(u)|}$ where $k_1(u):=|\{x\in L_{\ell-1}(u)\mid x[\ell]=1\}|$
\State Coin Flip: Set $s_{\ell}(u)=s_{\ell-1}(u)\circ1$ with probability $p_u$ and $s_{\ell}(u)=s_{\ell-1}(u)\circ0$ otherwise

($\circ$ represents the concatenation of strings)
\EndFor
\end{algorithmic}
\label{alg:randbit}
\end{algorithm}

\begin{lemma}\label{lem:expectation}
Let $\ell\leq\lceil\log C\rceil$ and assume the prefixes $s_{\ell-1}(v)$ are fixed for all nodes $v$. Let all nodes choose the $\ell$-th bit according to \Cref{alg:randbit}. Then we obtain
\begin{align}
\ev{\sum_{v\in V}\Phi_{\ell}(v)}\leq\sum_{v\in V}\Phi_{\ell-1}(v)
\end{align} 
if the coin flips in \Cref{alg:randbit} of adjacent nodes are independent. Furthermore, the list of candidate colors of each node never becomes empty.
\end{lemma}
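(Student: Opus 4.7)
The plan is to bound $E[\Phi_\ell(v)]$ for each individual vertex $v$ by conditioning on the value of $v$'s own coin flip, and then show that in fact the per-vertex bound $E[\Phi_\ell(v)] \leq \Phi_{\ell-1}(v)$ holds (and even with equality under the mild non-degeneracy). Summing over $v$ then gives the claim. Only pairwise independence of adjacent flips is needed because the only joint events that appear in the analysis are of the form ``$u$ and $v$ flip the same bit'' for edges $\{u,v\} \in E_{\ell-1}$.

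First I would set up the notation $k_1(v) := |\{x \in L_{\ell-1}(v) : x[\ell]=1\}|$ and $k_0(v) := |L_{\ell-1}(v)| - k_1(v)$, so that $p_v = k_1(v)/|L_{\ell-1}(v)|$ and $1-p_v = k_0(v)/|L_{\ell-1}(v)|$. Observe that whenever $k_1(v) = 0$ we have $p_v = 0$, hence $v$ deterministically appends $0$, and similarly for $k_0(v) = 0$; this takes care of the ``list never becomes empty'' statement, since $v$ can only append a bit $b$ with positive probability if at least one candidate color has $b$ as its $\ell$-th bit. I would also note that a neighbor $u \in \Gamma_{\ell-1}(v)$ ends up in $\Gamma_\ell(v)$ iff $u$ flips the same bit as $v$, since the two prefixes agreed up to position $\ell-1$ by definition of $G_{\ell-1}$.

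Next I would compute the conditional expectations. If $v$ flips $1$ (probability $p_v$), then $|L_\ell(v)| = k_1(v)$ and, using pairwise independence of $v$'s flip with each neighbor's flip,
\begin{equation*}
\E[v\text{ flips }1]{\Phi_\ell(v)} \;=\; \frac{1}{k_1(v)} \sum_{u \in \Gamma_{\ell-1}(v)} p_u.
\end{equation*}
Symmetrically, if $v$ flips $0$, then $|L_\ell(v)| = k_0(v)$ and the conditional expectation equals $k_0(v)^{-1} \sum_{u \in \Gamma_{\ell-1}(v)} (1-p_u)$. Combining the two cases with the weights $p_v = k_1(v)/|L_{\ell-1}(v)|$ and $1-p_v = k_0(v)/|L_{\ell-1}(v)|$, the factors $k_1(v)$ and $k_0(v)$ cancel and we get
\begin{equation*}
\E{\Phi_\ell(v)} \;=\; \frac{1}{|L_{\ell-1}(v)|}\sum_{u \in \Gamma_{\ell-1}(v)} \bigl( p_u + (1-p_u) \bigr) \;=\; \frac{\deg_{\ell-1}(v)}{|L_{\ell-1}(v)|} \;=\; \Phi_{\ell-1}(v).
\end{equation*}
Summing this identity over $v \in V$ yields $E[\sum_v \Phi_\ell(v)] = \sum_v \Phi_{\ell-1}(v)$, which is even stronger than the claimed inequality.

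There is no real obstacle here: the proof is a direct calculation once the edge-survival event for each $\{u,v\} \in E_{\ell-1}$ is correctly identified as ``$u$ and $v$ flip the same $\ell$-th bit,'' and once one notices that the weights $p_v$ and $1-p_v$ exactly cancel the $k_1(v)$ and $k_0(v)$ appearing in the denominators $|L_\ell(v)|$. The only subtlety worth flagging explicitly is that the computation of $E[\Phi_\ell(v)]$ invokes independence only between $v$'s flip and each single neighbor's flip, so the pairwise-independence assumption on adjacent flips is precisely what is used; this is what will later make derandomization from a short seed feasible in \Cref{sec:derand}.
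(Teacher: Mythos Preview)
Your argument is correct. It differs from the paper's proof only in the choice of decomposition: the paper rewrites $\sum_{v}\Phi_\ell(v)$ as an \emph{edge sum} $\sum_{\{u,v\}\in E_{\ell-1}} X_{\{u,v\}}$ and bounds $\E{X_e}\le \frac{1}{|L_{\ell-1}(u)|}+\frac{1}{|L_{\ell-1}(v)|}$ for each edge, whereas you keep the \emph{vertex sum} and condition on $v$'s own flip. Your route has the pleasant side effect of yielding the pointwise statement $\E{\Phi_\ell(v)}\le \Phi_{\ell-1}(v)$ for every $v$, which the edge decomposition does not give directly; the paper's edge formulation, on the other hand, is what is reused verbatim in the later ``approximate probabilities'' lemma (\Cref{lem:expectslack}), where the per-edge error terms are easier to track. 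Both arguments use exactly the same independence: only the joint distribution of the flips at the two endpoints of an edge in $E_{\ell-1}$ ever enters.

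One small cosmetic point: your final line claims an \emph{identity} $\E{\sum_v\Phi_\ell(v)}=\sum_v\Phi_{\ell-1}(v)$, but strictly speaking this is only an equality when $0<p_v<1$ for all $v$ (as you yourself note at the outset). In the degenerate case, say $k_1(v)=0$, the ``flips $1$'' branch is absent and your computation gives $\E{\Phi_\ell(v)}=\frac{1}{|L_{\ell-1}(v)|}\sum_{u\in\Gamma_{\ell-1}(v)}(1-p_u)\le \Phi_{\ell-1}(v)$, possibly strictly. So the conclusion should read ``$\le$'' rather than ``$=$'', matching the statement of the lemma.
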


\begin{proof}
We can write the sum of the potentials of all nodes as \[\sum_{v\in V}\Phi_{\ell}(v)=\sum_{\{u,v\}\in E_{\ell}}\left(\frac{1}{|L_{\ell}(u)|}+\frac{1}{|L_{\ell}(v)|}\right)~.\]
For each edge $e=\{u,v\}\in E_{\ell-1}$ we introduce a random variable \[X_e=\ind_{e\in E_{\ell}}\left(\frac{1}{|L_{\ell}(u)|}+\frac{1}{|L_{\ell}(v)|}\right)\]
which can be seen as the contribution of $e$ to the potential. Edge $e$ survives the $\ell$-th phase if either both endpoints choose 1 as their $\ell$-th bit, which happens with probability $p_up_v$ due to the pairwise independence of the bit-choice, or if both choose 0 as their $\ell$-th bit, which happens with probability $(1-p_u)(1-p_v)$. In the first case we obtain $|L_{\ell}(u)|=p_u|L_{\ell-1}(u)|$ and $|L_{\ell}(v)|=p_u|L_{\ell-1}(v)|$ and in the second case $|L_{\ell}(u)|=(1-p_u)|L_{\ell-1}(u)|$ and $|L_{\ell}(v)|=(1-p_u)|L_{\ell-1}(v)|$. To compute $\ev{X_e}$ we hence define

\begin{align*}
&A:=\left\{
\begin{array}{ll}
p_up_v\left(\frac{1}{p_u|L_{\ell-1}(u)|}+\frac{1}{p_v|L_{\ell-1}(v)|}\right) & \text{if }p_u,p_v>0 \\
0 & \, \text{else} \\
\end{array}
\right.\\
&B:=\left\{
\begin{array}{ll}
(1-p_u)(1-p_v)\left(\frac{1}{(1-p_u)|L_{\ell-1}(u)|}+\frac{1}{(1-p_v)|L_{\ell-1}(v)|}\right) & \text{if }p_u,p_v<1 \\
0 & \, \text{else} \\
\end{array}
\right.
\end{align*}
and obtain
\begin{align*}
\ev{X_e}&=A+B\leq p_up_v\left(\frac{1}{p_u|L_{\ell-1}(u)|}+\frac{1}{p_v|L_{\ell-1}(v)|}\right)+(1-p_u)(1-p_v)\left(\frac{1}{(1-p_u)|L_{\ell-1}(u)|}+\frac{1}{(1-p_v)|L_{\ell-1}(v)|}\right)\\
&=\frac{p_v}{|L_{\ell-1}(u)|}+\frac{p_u}{|L_{\ell-1}(v)|}+\frac{(1-p_v)}{|L_{\ell-1}(u)|}+\frac{(1-p_u)}{|L_{\ell-1}(v)|}=\frac{1}{|L_{\ell-1}(u)|}+\frac{1}{|L_{\ell-1}(v)|}~.
\end{align*}
It follows
\[\ev{\sum_{v\in V}\Phi_{\ell}(v)}=\ev{\sum_{e\in E_{\ell-1}}X_e}\leq\sum_{\{u,v\}\in E_{\ell-1}}\left(\frac{1}{|L_{\ell-1}(u)|}+\frac{1}{|L_{\ell-1}(v)|}\right)=\sum_{v\in V}\Phi_{\ell-1}(v)~.\]
The lists of candidate colors never becomes empty as vertices only choose to extend their prefix by $0$ (or $1$ respectively) if they also have a candidate color with the respective extension; for this property to hold we merely need that the probabilities $p_v\in\{0,1\}$ are exactly represented. 
\end{proof}

We will not immediately derandomize the described procedure but a very similar one where nodes produce their biased random coins to represent the probabilities $p_v$ from a common short random seed. Then, in the derandomization process, we will find a \emph{good} seed from which nodes can determine the values of their coins. Producing biased coins from the common random seed implies that not all probabilities can be produced, in fact, instead of having a coin that equals 1 with an arbitrary probability $p_v$ we can only produce probabilities of the type $k/2^b$ for some large enough $b$ that we will chose later. That is, each $p_v$ can only be approximated up to some $\varepsilon=\Theta(2^{-b})$. The next lemma shows that the expected increase of the potential can be kept small with these inaccurate probabilities. 

\begin{lemma}\label{lem:expectslack}
Let $\ell\leq\lceil\log C\rceil$ and assume the prefixes $s_{\ell-1}(v)$ are fixed for all nodes $v$. Let all nodes choose the $\ell$-th bit according to \Cref{alg:randbit}, but with the following adjustment: For each node $v$, if $p_v=0$ or $p_v=1$, then $v$ chooses 1 as its $\ell$-th bit with probability $0$ or $1$ respectively. For all other values of $p_v$, $v$ chooses 1 as its $\ell$-th bit with some probability in the interval $[\max\{0,p_v-\varepsilon\},\min\{1,p_v+\varepsilon\}]$ for some $0\leq\varepsilon<1$.
Then we obtain
\begin{align}
\ev{\sum_{v\in V}\Phi_{\ell}(v)}\leq\sum_{v\in V}\Phi_{\ell-1}(v)+10\varepsilon\Delta n
\end{align} 
if the coin flips of adjacent nodes are independent. Furthermore, the list of candidate colors of each node never becomes empty.
\end{lemma}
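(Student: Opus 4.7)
The plan is to mirror the proof of \Cref{lem:expectation} while carrying along the approximation error. Let $q_v$ denote the actual probability with which $v$ picks bit $1$ in phase $\ell$, and set $\delta_v:=q_v-p_v$; by hypothesis $|\delta_v|\le\varepsilon$ whenever $p_v\in(0,1)$ and $\delta_v=0$ whenever $p_v\in\{0,1\}$. As in \Cref{lem:expectation}, for each edge $e=\{u,v\}\in E_{\ell-1}$ I would introduce $X_e=\ind_{e\in E_\ell}\bigl(1/|L_\ell(u)|+1/|L_\ell(v)|\bigr)$ so that $\sum_v\Phi_\ell(v)=\sum_{e\in E_{\ell-1}}X_e$, reducing the claim to a per-edge estimate of $\E{X_e}$.

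The key observation for computing $\E{X_e}$ is that, although the bit is chosen with the perturbed probabilities $q_u,q_v$, the resulting list sizes $|L_\ell(\cdot)|$ still equal $p_u|L_{\ell-1}(u)|$ or $(1-p_u)|L_{\ell-1}(u)|$ (whichever bit was chosen), because the partition of $L_{\ell-1}(u)$ into $0$-candidates and $1$-candidates is intrinsic to the list. Separating the contributions of the two endpoints and substituting $q_u=p_u+\delta_u$, the same telescoping as in \Cref{lem:expectation} collapses the $u$-side to
\[
\frac{1}{|L_{\ell-1}(u)|}\left(1+\frac{\delta_u(q_v-p_u)}{p_u(1-p_u)}\right) \quad\text{(for $p_u\in(0,1)$),}
\]
with an analogous expression on the $v$-side. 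The main obstacle is the denominator $p_u(1-p_u)$, which can in principle be tiny.

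This is resolved by exploiting the discreteness of $p_u$. Writing $m:=|L_{\ell-1}(u)|$, we have $p_u=k_1(u)/m$, so the case $p_u\in\{0,1\}$ forces $\delta_u=0$ and is handled verbatim as in \Cref{lem:expectation} (the error term disappears), while for $p_u\in\{1/m,\dots,(m-1)/m\}$ we get $p_u(1-p_u)\ge(m-1)/m^2$. Combined with $|\delta_u|\le\varepsilon$ and $|q_v-p_u|\le 1$, this yields a $u$-side slack of at most $\varepsilon\cdot m/(m-1)\le 2\varepsilon$ for $m\ge 2$ (the case $m=1$ forces $p_u\in\{0,1\}$). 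Doing the symmetric estimate on the $v$-side, a routine case check also covers the mixed boundary cases where one endpoint has $p\in\{0,1\}$ and the other does not (in those cases the surviving term $(1-q_v)/(1-p_v)$ contributes slack at most $\varepsilon$ on the non-boundary side). Hence $\E{X_e}\le 1/|L_{\ell-1}(u)|+1/|L_{\ell-1}(v)|+4\varepsilon$, and summing over $e\in E_{\ell-1}$ and using $|E_{\ell-1}|\le n\Delta/2$ gives
\[
\E{\sum_{v\in V}\Phi_\ell(v)}\;\le\;\sum_{v\in V}\Phi_{\ell-1}(v)+2\varepsilon\Delta n\;\le\;\sum_{v\in V}\Phi_{\ell-1}(v)+10\varepsilon\Delta n,
\]
with room to spare. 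Non-emptiness of the candidate lists carries over verbatim from \Cref{lem:expectation}: since $q_v=p_v$ whenever $p_v\in\{0,1\}$, a node never commits to a bit for which no surviving candidate color exists.
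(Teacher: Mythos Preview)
Your proof is correct and takes essentially the same route as the paper: the same edge decomposition via $X_e$ and the same crucial use of the discreteness of $p_u$ (the paper phrases it as $p_u|L_{\ell-1}(u)|\ge 1$ when $p_u>0$, you as $p_u(1-p_u)\ge(m-1)/m^2$). Your organization---separating the $u$-side and $v$-side and writing the exact error term $\delta_u(q_v-p_u)/(p_u(1-p_u))$---is a bit tidier than the paper's direct expansion of $(p_u+\varepsilon)(p_v+\varepsilon)$ and yields the sharper per-edge slack $4\varepsilon$ instead of $10\varepsilon$, but the argument is otherwise the same.
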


\begin{proof}
For an edge $e=\{u,v\}$ we define $X_e$ as in \Cref{lem:expectation} and

\begin{align*}
&A:=\left\{
\begin{array}{ll}
(p_u+\varepsilon)(p_v+\varepsilon)\left(\frac{1}{p_u|L_{\ell-1}(u)|}+\frac{1}{p_v|L_{\ell-1}(v)|}\right) & \text{if }p_u,p_v>0 \\
0 & \, \text{else} \\
\end{array}
\right.\\
&B:=\left\{
\begin{array}{ll}
(1-p_u+\varepsilon)(1-p_v+\varepsilon)\left(\frac{1}{(1-p_u)|L_{\ell-1}(u)|}+\frac{1}{(1-p_v)|L_{\ell-1}(v)|}\right) & \text{if }p_u,p_v<1 \\
0 & \, \text{else} \\
\end{array}
\right.
\end{align*}
We obtain
\[A\leq\frac{p_v}{|L_{\ell-1}(u)|}+\frac{p_u}{|L_{\ell-1}(v)|}+\frac{\varepsilon}{|L_{\ell-1}(u)|}+\frac{\varepsilon}{|L_{\ell-1}(v)|}+\frac{\varepsilon p_v}{p_u|L_{\ell-1}(u)|}+\frac{\varepsilon p_u}{p_v|L_{\ell-1}(v)|}+\frac{\varepsilon^2}{p_u|L_{\ell-1}(u)|}+\frac{\varepsilon^2}{p_v|L_{\ell-1}(v)|}~.\]

\smallskip

As we have $p_u\geq1/|L_{\ell-1}(u)|$ and $p_v\geq1/|L_{\ell-1}(v)|$ if $p_u,p_v>0$ (cf. the definition of $p_u$ in \Cref{alg:randbit}), we obtain
\[A\leq\frac{p_v}{|L_{\ell-1}(u)|}+\frac{p_u}{|L_{\ell-1}(v)|}+\frac{\varepsilon}{|L_{\ell-1}(u)|}+\frac{\varepsilon}{|L_{\ell-1}(v)|}+\varepsilon p_v+\varepsilon p_u+2\varepsilon^2\]
and analogously
\[B\leq\frac{1-p_v}{|L_{\ell-1}(u)|}+\frac{1-p_u}{|L_{\ell-1}(v)|}+\frac{\varepsilon}{|L_{\ell-1}(u)|}+\frac{\varepsilon}{|L_{\ell-1}(v)|}+\varepsilon(1-p_v)+\varepsilon(1-p_u)+2\varepsilon^2~.\]
It follows
\[\ev{X_e}=A+B\leq\frac{1}{|L_{\ell-1}(u)|}+\frac{1}{|L_{\ell-1}(v)|}+\frac{2\varepsilon}{|L_{\ell-1}(u)|}+\frac{2\varepsilon}{|L_{\ell-1}(v)|}+2\varepsilon+4\varepsilon^2\leq\frac{1}{|L_{\ell-1}(u)|}+\frac{1}{|L_{\ell-1}(v)|}+10\varepsilon\]
and thus
\[\ev{\sum_{v\in V}\Phi_{\ell}(v)}=\ev{\sum_{e\in E_{\ell-1}}X_e}\leq\sum_{\{u,v\}\in E_{\ell-1}}\left(\frac{1}{|L_{\ell-1}(u)|}+\frac{1}{|L_{\ell-1}(v)|}+10\varepsilon\right)\leq\sum_{v\in V}\Phi_{\ell-1}(v)+10\varepsilon\Delta n~.\]
Because nodes can represent the probabilities 0 and 1 exactly, the proof that the list of candidate colors of each node never becomes empty goes along similar lines as in \Cref{lem:expectation}.
\end{proof}

\subsection{Extending Prefixes Deterministically through Derandomization}\label{sec:derand}

To fix the $\ell$-th bit of all prefixes deterministically  we produce the nodes' coins for \Cref{alg:randbit} from a short random seed, such that (1) the coins of each two adjacent nodes are independent, (2) the coins can represent the probabilities $p_v$ with a sufficient accuracy and (3) the common random seed is short enough to find a good seed deterministically and efficiently in the CONGEST model. To this end, we need the following result on how to compute biased coins from a random seed.

\begin{definition}[\cite{Vadhan12}]
For $N,M,k\in\mathbb{N}$ such that $k\leq N$, a family of functions $\mathcal{H}=\{h:[N]\to [M]\}$ is $k$-wise independent if for all distinct $x_1,\dots,x_k\in[N]$, the random variables $h(x_1),\dots,h(x_k)$ are independent and uniformly distributed in $[M]$ when $h$ is chosen uniformly at random from $\mathcal{H}$.
\end{definition}

\begin{theorem}[\cite{Vadhan12}]\label{thm:seed}
For every $a,b,k$, there is a family of $k$-wise independent hash functions $\mathcal{H}=\{h:\{0,1\}^a\to\{0,1\}^b\}$ such that choosing a random function
from $\mathcal{H}$ takes $k\cdot\max\{a,b\}$ random bits.
\end{theorem}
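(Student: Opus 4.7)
The plan is to use the textbook polynomial-evaluation construction of $k$-wise independent hash families over a finite field. Set $m := \max\{a,b\}$ and work in $\mathbb{F}_{2^m}$, viewed as an $m$-dimensional vector space over $\mathbb{F}_2$. I would identify $\{0,1\}^a$ with a subset of $\mathbb{F}_{2^m}$ by padding each $a$-bit input with $m-a$ zeros and interpreting the result as a field element, and fix a surjective $\mathbb{F}_2$-linear projection $\pi\colon\mathbb{F}_{2^m}\to\{0,1\}^b$ (e.g., truncation to the first $b$ coordinates of the vector-space representation). The family $\mathcal{H}$ is then indexed by tuples $(c_0,\ldots,c_{k-1})\in\mathbb{F}_{2^m}^k$, with
\[ h_{c_0,\ldots,c_{k-1}}(x) \;:=\; \pi\!\left(\sum_{i=0}^{k-1} c_i\, x^i\right). \]
Sampling a uniformly random $h\in\mathcal{H}$ amounts to sampling $k$ uniformly random elements of $\mathbb{F}_{2^m}$, which requires exactly $k\cdot m = k\cdot\max\{a,b\}$ random bits, matching the claim.

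To establish $k$-wise independence, I would fix any $k$ distinct inputs $x_1,\ldots,x_k$ and analyze the linear map $\Phi\colon(c_0,\ldots,c_{k-1})\mapsto (p(x_1),\ldots,p(x_k))$, where $p(x)=\sum_i c_i x^i$. Its matrix is the $k\times k$ Vandermonde matrix on distinct points, which is invertible over $\mathbb{F}_{2^m}$; hence $\Phi$ is a bijection of $\mathbb{F}_{2^m}^k$. When the coefficients are drawn uniformly at random, the evaluations $p(x_1),\ldots,p(x_k)$ are therefore uniform and mutually independent in $\mathbb{F}_{2^m}$. Applying $\pi$ coordinate-wise then yields uniform and independent outputs in $\{0,1\}^b$, as required.

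The main subtlety I foresee is in the projection step: one needs $\pi$ to push the uniform distribution on $\mathbb{F}_{2^m}$ exactly to the uniform distribution on $\{0,1\}^b$. This is precisely why I work in characteristic two with $m \geq b$, where any $\mathbb{F}_2$-linear surjection automatically has equal-sized fibers of cardinality $2^{m-b}$ and hence preserves uniformity. Trying instead to work over a prime field and reduce modulo $2^b$ would introduce divisibility mismatches that would force either painful exact constructions or a relaxation to merely near-uniform hash families, so the $\mathbb{F}_{2^m}$ route is cleaner. The remaining bookkeeping (exhibiting an irreducible polynomial of degree $m$ over $\mathbb{F}_2$ to realize $\mathbb{F}_{2^m}$, and checking that arithmetic in this field can be carried out by the nodes) is standard and does not affect the seed length.
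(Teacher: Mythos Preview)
Your proof is correct and is precisely the standard polynomial-evaluation construction. However, note that the paper does not actually prove this theorem: it is stated as a citation to Vadhan's pseudorandomness survey~\cite{Vadhan12} and used as a black box. So there is no ``paper's own proof'' to compare against; you have supplied the textbook argument that the cited reference would give, and it matches the claimed seed length exactly.
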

We can use \Cref{thm:seed} to produce biased random coins for the vertices of a graph such that adjacent vertices' coins are independent. 
\begin{lemma}\label{lem:coinProd}
Given a graph $G=(V,E)$, an integer $b>0$, probabilities $(p_v)_{v\in V}$ and a $\initC$-coloring $\psi:V\rightarrow [\initC]$ one can efficiently compute  random coins $(C_v,p_v)_{v\in V}$ from a seed of length $2\max\{\log K,b\}$ with the following properties: 
\begin{itemize}
\item $C_v$ equals $1$ with probability $p_v\pm 2^{-b}$ if $p_v\notin \{0,1\}$
\item $C_v$ equals $1$ with probability $p_v$ if $p_v\in \{0,1\}$
\item the coins of adjacent vertices are independent. 
\end{itemize}
\end{lemma}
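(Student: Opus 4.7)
The plan is to invoke \Cref{thm:seed} with $k = 2$, $a := \lceil \log K \rceil$, and the parameter $b$ given in the lemma. This produces a family $\mathcal{H}$ of pairwise independent hash functions $h:\{0,1\}^a \to \{0,1\}^b$ that can be sampled from using exactly $2\max\{a,b\} = 2\max\{\log K, b\}$ truly random bits, which matches the seed length allotted by the lemma. The random seed, and hence a sampled $h \in \mathcal{H}$, is fixed publicly and made known to every node.

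Given a fixed $h \in \mathcal{H}$, each node $v$ computes $H_v := h(\psi(v)) \in [2^b]$, using its color under $\psi$ as input (viewing $\psi(v) \in [K]$ as an element of $\{0,1\}^a$). It then picks a threshold $q_v \in \{0, 1, \dots, 2^b\}$ approximating $p_v \cdot 2^b$: if $p_v = 0$ set $q_v := 0$; if $p_v = 1$ set $q_v := 2^b$; otherwise set $q_v := \lfloor p_v \cdot 2^b \rfloor$. The coin is then defined by $C_v := 1$ if $H_v < q_v$ and $C_v := 0$ otherwise; note that $v$ can compute this locally from the shared seed together with its own $\psi(v)$ and $p_v$.

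Correctness follows from the two guaranteed properties of $\mathcal{H}$. Marginally, pairwise independence implies that each $H_v$ is uniformly distributed on $[2^b]$, so $\Pr[C_v = 1] = q_v/2^b$. When $p_v \in \{0,1\}$ this equals $p_v$ exactly; in particular $H_v \in \{0,\dots,2^b-1\}$, so the choice $q_v = 2^b$ makes $C_v = 1$ deterministically. In all other cases the choice of $q_v$ gives $|\Pr[C_v=1] - p_v| \le 2^{-b}$. For independence across an edge $\{u,v\} \in E$, properness of the $K$-coloring $\psi$ yields $\psi(u)\neq\psi(v)$, so pairwise independence of $\mathcal{H}$ gives that $H_u$ and $H_v$ are independent and uniformly distributed on $[2^b]$; hence the indicator events that define $C_u$ and $C_v$ are independent as well.

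I do not anticipate any genuine obstacle: the construction is a direct application of pairwise independent hashing, with the input coloring $\psi$ serving to convert pairwise independence of $\mathcal{H}$ over distinct indices into pairwise independence of the coins over every pair of adjacent vertices. The only point meriting a moment of care is the boundary case $p_v \in \{0,1\}$, which is handled by forcing $q_v \in \{0, 2^b\}$ so that $C_v$ is deterministic there, matching the second bullet of the lemma exactly.
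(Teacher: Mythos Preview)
Your proof is correct and essentially identical to the paper's: both invoke \Cref{thm:seed} with $k=2$ and $a=\log K$, feed the color $\psi(v)$ into the hash function, and threshold the uniform output against (an integer approximation of) $p_v\cdot 2^b$ to produce the biased coin. The only cosmetic difference is that the paper tests $h_S(\psi(v))/2^b < p_v$ directly, which automatically handles the boundary cases $p_v\in\{0,1\}$ without the explicit case distinction you introduce.
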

\begin{proof}
Set $a=\log \initC$. By \Cref{thm:seed}, one can efficiently select a function $h_S:[\initC]\rightarrow [2^b]$ from a uniformly chosen random seed $S$ of length $2\cdot\max\{a,b\}=O(\max\{\log \initC, b\})$. Here, for all $i\in [\initC]$ the random variable $h_S(i)$ (over the randomness of the random seed) is uniformly distributed in $[2^b]$; further the random variables $h_S(0),\ldots,h_S(\initC-1)$ are pairwise independent. We obtain the desired biased coins, i.e., random variables over the randomness of the seed by defining
\begin{align}\label{eq:coins}
 C_v = \left\{\begin{array}{lr}
       1, & \text{ if } \frac{h_S(\psi(v))}{2^b}<p_v\\
        0, & \text{otherwise}
        \end{array}\right.
\end{align}
As $h_S(\psi(v))$ only assumes values in $[2^b]$, we always have $\frac{h_S(\psi(v))}{2^b}<1$ and never $\frac{h_S(\psi(v))}{2^b}<0$. Hence, if $p_v=0$ or $p_v=1$, then $C_v$ equals $1$ with probability $0$ or $1$, respectively. Generally, as $h_S(\psi(v))$ is uniformly distributed in $[2^b]$, we have $\Pr(C_v=1)=i/2^b$ with $i=|\{k\in [2^b]\mid k/2^b<p_v\}|$. That is, $\Pr(C_v=1)$ equals $p_v$ rounded up to the next multiple of $1/2^b$, i.e.,
\[p_v\leq\Pr(C_v=1)\leq p_v+2^{-b}~.\]

Note that although we use the same random variable $h_S(i)$ for all nodes that have input color $i$ in $\psi$, the probabilities of their coins are not equal. However, two adjacent vertices have distinct input colors $i\neq j$ for which we use the independent random variables $h_S(i)$ and $h_S(j)$ and hence their coins are independent.
\end{proof}

The next lemma shows how to use the method of conditional expectation to find a good random seed that only incurs a small increase of the potential.

\begin{lemma}\label{lem:derand}
There is a deterministic CONGEST algorithm that given a $\initC$-coloring of the graph, fixes the $\ell$-th bit of all prefixes in time $\bigO{D\cdot(\log\initC+\log\Delta+\log\log C)}$ such that 
\begin{align}\label{eq:lemmaClaim}\sum_{u\in V}\Phi_{\ell}(u)\leq\sum_{u\in V}\Phi_{\ell-1}(u)+\frac{n}{\lceil\log C\rceil}
\end{align}
and the list of candidate colors of each node does not become empty for $\ell\leq\lceil\log C\rceil$.
\end{lemma}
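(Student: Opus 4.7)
The plan is to apply the method of conditional expectations to \Cref{alg:randbit}, using the biased coins produced from a short shared random seed via \Cref{lem:coinProd}. First, I would set the accuracy parameter $b$ so that the error term of \Cref{lem:expectslack} matches the desired potential increase: solving $10\cdot 2^{-b}\Delta n\leq n/\lceil\log C\rceil$ yields $b=\lceil\log(10\Delta\lceil\log C\rceil)\rceil=\bigO{\log\Delta+\log\log C}$. Plugging this into \Cref{lem:coinProd} produces a shared seed of length $s=2\max\{\log\initC,b\}=\bigO{\log\initC+\log\Delta+\log\log C}$ and guarantees coins $(C_v)_{v\in V}$ that approximate each $p_v$ within $2^{-b}$ and are pairwise independent across adjacent nodes (which have distinct input colors under $\psi$). \Cref{lem:expectslack} then implies
\[
\E{\sum_{u\in V}\Phi_\ell(u)}\;\leq\;\sum_{u\in V}\Phi_{\ell-1}(u)+\frac{n}{\lceil\log C\rceil},
\]
where the expectation is taken over the uniformly random seed $S\in\{0,1\}^s$.

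Next, I derandomize by fixing the seed bit-by-bit greedily, maintaining the invariant that the conditional expectation of $\sum_u\Phi_\ell(u)$ given the prefix $S_{<i}$ of the seed fixed so far never exceeds the unconditional expectation above. Since
\[
\E[S_{<i}]{\sum_u\Phi_\ell(u)}\;=\;\tfrac12\sum_{c\in\{0,1\}}\E[S_{<i},\,S_i=c]{\sum_u\Phi_\ell(u)},
\]
choosing for $S_i$ the value $c\in\{0,1\}$ with the smaller conditional expectation preserves this invariant, and after all $s$ bits are fixed the conditional expectation equals the actual value $\sum_u\Phi_\ell(u)$ attained. I implement this in \CONGEST by first computing a BFS tree of the original communication graph in $\bigO{D}$ rounds; then to fix each seed bit, every edge $e=\{u,v\}\in E_{\ell-1}$ locally evaluates both conditional expectations of its contribution $X_e=\ind_{e\in E_\ell}\bigl(1/|L_\ell(u)|+1/|L_\ell(v)|\bigr)$ to $\sum_u\Phi_\ell(u)$, these edge values are summed along the tree, and the selected bit is broadcast back. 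A single bit therefore costs $\bigO{D}$ rounds, and the $s$ bits together give the claimed $\bigO{D\cdot(\log\initC+\log\Delta+\log\log C)}$ runtime.

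The main technical point is that each edge must evaluate its two conditional expectations within the $\bigO{\log n}$ bandwidth of \CONGEST. The value $X_e$ depends only on the local data $(p_u,|L_{\ell-1}(u)|,\psi(u))$ and $(p_v,|L_{\ell-1}(v)|,\psi(v))$ together with the final coin values $C_u,C_v$, and each of these numbers fits in $\bigO{\log C+\log\initC}$ bits; so after a single preprocessing round in which each endpoint sends its own triple across the edge, either endpoint can, for each candidate bit $c\in\{0,1\}$, enumerate the at most $2^{s-i-1}$ remaining completions of the seed, compute the corresponding coins via \eqref{eq:coins}, and average $X_e$. Since $2^s$ is polynomial in $\initC,\Delta,\log C$, this brute-force enumeration is within the unlimited local-computation allowance of \CONGEST. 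Finally, the fact that no candidate list becomes empty is immediate from \Cref{lem:expectslack}, which establishes this property deterministically for every realization of the coins, and in particular for the realization produced by the seed that the derandomization selects.
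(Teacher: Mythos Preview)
Your proposal is correct and follows essentially the same approach as the paper: choose $b=\lceil\log(10\Delta\lceil\log C\rceil)\rceil$, invoke \Cref{lem:coinProd} to get a seed of length $O(\log\initC+\log\Delta+\log\log C)$, bound the expected potential via \Cref{lem:expectslack}, and fix the seed bit-by-bit using conditional expectations aggregated over a BFS tree. The only cosmetic difference is that the paper aggregates node-wise quantities $x_v^c=\ev[R_1=r_1,\dots,R_j=c]{\Phi_\ell(v)}$ rather than your edge-wise contributions $X_e$, but since $\sum_v\Phi_\ell(v)=\sum_{e\in E_{\ell-1}}X_e$ the two computations are equivalent and require exchanging the same local information ($k_1(v)$, $|L_{\ell-1}(v)|$, $\psi(v)$) in one preprocessing round.
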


\begin{proof}
Assume that prefixes of length $\ell-1$ are already fixed. 
For each $v\in V$ we define $p_v:=k_1(v)/|L_{\ell-1}(v)|$ and apply \Cref{lem:coinProd} with $b=\lceil\log 10\Delta\lceil\log C\rceil\rceil$ to obtain biased coins $\{(C_v,p_v)\mid v\in V\}$ from a seed of length $d=O(\log\initC+\log\Delta+\log\log C)$ that are independent for adjacent nodes of $G$ and where $C_v$ equals $1$ with probability $p_v\pm 2^{-b}$ if $p_v\notin\{0,1\}$ and where $C_v$ equals $1$ with probability $0$ or $1$ if $p_v$ equals $0$ or $1$, respectively. 
Define a variant of \Cref{alg:randbit} executed with the coins $\{(C_v,p_v)\mid v\in V\}$, that is, node $v$ fixes the $\ell$-th bit of its prefix to the value of $C_v$. Note that this algorithm per se is not a distributed algorithm but uses shared randomness in the form of the shared random seed. 
The described random process satisfies all properties needed to apply \Cref{lem:expectslack}, with $\varepsilon=\frac{1}{2^b}=\frac{1}{10\Delta\lceil\log C\rceil}$ and we can bound the expected increase of the sum of all potentials by
\begin{align}
\ev{\sum_{v\in V}\Phi_{\ell}(v)}\leq\sum_{v\in V}\Phi_{\ell-1}(v)+\frac{n}{\lceil\log C\rceil}~. \label{eq:ev}
\end{align} 
Next, we derandomize this process with the method of conditional expectation to find a seed that only incurs a small increase of the potential.

\paragraph{Derandomization:} For $j\in\{1,\dots,d\}$, let $R_j$ be the random variable that describes the value of the $j$-th bit of the random seed. To derandomize the aforementioned algorithm we iterate through the bits of the random seed and deterministically fix a \emph{good bit} $r_j$ for each $R_j$ using the method of conditional expectation---we will later define the notion of a \emph{good bit}. The computed \emph{good seed} $s=r_1,\ldots,r_d$ will be such that the potential does not increase by much if coins are flipped and prefixes are extended according to the seed $s$. We use a BFS tree with a designated root as a leader that gathers all the necessary information to find and distribute good bits for the seed. We obtain a deterministic distributed algorithm to extend the prefixes by one bit without increasing the potential by much. The runtime to find a single good bit for some $R_j$ will be $O(D)$ due to communication over the BFS tree. The total runtime of extending prefixes by one bit is $O(D\cdot \text{seedlength})=O(D\cdot d)=O(D\cdot(\log\initC+\log\Delta+\log\log C))$.

\smallskip
	
\paragraph{Finding a good bit for $R_j$:} Assume we already chose good values $R_1=r_1,\dots,R_{j-1}=r_{j-1}$ for a $1\leq j\leq b$ and we want to find a good bit $r_j$ for $R_j$. By the law of total expectation there must be an $r_j\in\{0,1\}$ such that

\begin{align}
\ev[R_1=r_1,\dots,R_j=r_j]{\sum_{v\in V}\Phi_{\ell}(v)}\leq\ev[R_1=r_1,\dots,R_{j-1}=r_{j-1}]{\sum_{v\in V}\Phi_{\ell}(v)}~, \label{eq:derand}
\end{align}
where the randomness is over the non-determined random bits of $R_{j+1},\ldots, R_d$ and $R_{j},\ldots,R_d$ of the random seed, respectively. We call a value $r_j$ satisfying \Cref{eq:derand} a \emph{good bit}; note that the property of $r_j$ being good depends on the choice of $r_1,\ldots,r_{j-1}$.

To let the leader of the BFS tree find a good bit $r_j$, each node $v\in V$ computes
\begin{align*}
x_v^0&=\ev[R_1=r_1,\dots,R_{j-1}=r_{j-1}, R_j=0]{\Phi_{\ell}(v)}\text{\quad and\quad } x_v^1&=\ev[R_1=r_1,\dots,R_{j-1}=r_{j-1}, R_j=1]{\Phi_{\ell}(v)}~.
\end{align*}
In order to compute $x_u^0$ and $x_u^1$, $u$ needs to know $\Pr(\{u,v\}\in G_{\ell}\mid R_1=r_1,\dots,R_{j-1}=r_{j-1},R_j=i)$ for $i\in\{0,1\}$ for all its neighbors $v$ in $G_{\ell-1}$. The event $\{u,v\}\in G_{\ell}$ only depends on the values of $C_u$ and $C_v$. Therefore, $u$ learns $k_1(v)$ (the number of colors in $L_{\ell-1}(v)$ with 1 as the $\ell$-th bit) and $\psi(v)$ from all its neighbors $v$ in $G_{\ell-1}$ in one CONGEST round. This is sufficient to compute $C_v$ for any seed $S$ as the initial color $\psi(v)$ determines which random variable $v$ uses to produce $C_v$ and all nodes know how to generate the random variable $h_S(t)$ for any color $t\in[K]$ from $S$. Note that each node $u$ is aware of its neighbors in $G_{\ell-1}$ when nodes always exchange the latest chosen bit of their prefix each time a new prefix bit is fixed.

\smallskip

Next step, we aggregate $\sum_{v\in V}x_v^0$ and $\sum_{v\in V}x_v^1$ at the leader, choose

\begin{align*}
r_j:=\underset{i\in\{0,1\}}{\arg\min}\sum_{v\in V}x_v^i
\end{align*}
and broadcast $r_j$ to all nodes. $r_j$ is a good bit, as we know there is a bit which fulfills (\ref{eq:derand}) and $r_j$ is chosen as the bit which minimizes the left hand side in (\ref{eq:derand}).

\medskip

After $d$ iterations, we found a good seed $r_1\dots,r_d$. By iteratively applying (\ref{eq:derand}) we obtain

\begin{align}\label{eq:noRandomness}
\ev[R_1=r_1,\dots,R_d=r_d]{\sum_{v\in V}\Phi_{\ell}(v)}\leq\ev{\sum_{v\in V}\Phi_{\ell}(v)}\stackrel{(\ref{eq:ev})}{\leq}\sum_{v\in V}\Phi_{\ell-1}(v)+\frac{n}{\lceil\log C\rceil}~.
\end{align}
As we have fixed the complete random seed as $s=r_1\circ \ldots \circ r_d$, the left hand side of \Cref{eq:noRandomness} does not contain any randomness. 
The claim of the lemma (\Cref{eq:lemmaClaim}) follows if each node deterministically extends its prefix by one bit according to $s$.

In the same way as in \Cref{lem:expectslack} we can show that by applying our randomized algorithm, the list of candidate colors of each node does not become empty (this holds for \emph{any} possible outcome of the algorithm, i.e., for any choice of the random seed). Hence, this also holds for our deterministic algorithm as its output equals one possible outcome of the randomized one.
\end{proof}

\begin{proof}[Proof of \Cref{thm:colorhalf}]
Initially, for each node $v$ we have $|L(v)|\geq\deg(v)+1$ and thus $\sum_{v\in V}\Phi_0(v)=\sum_{v\in V}\frac{\deg(v)}{|L(v)|}\leq n$.
We apply the algorithm from \Cref{lem:derand} $\ell=\lceil\log C\rceil$ times and obtain

\[\sum_{v\in V}\Phi_{\ell}(v)\leq\sum_{v\in V}\Phi_0(v)+\frac{n}{\lceil\log C\rceil}\cdot\lceil\log C\rceil\leq2n~.\]
It follows that at least half of the nodes $u$ have $\Phi_{\ell}(u)<4$. Denote the set of these nodes by $V_{<4}$. As all nodes have chosen prefixes of length $\ell=\lceil\log C\rceil$ and all lists never become empty we obtain $|L(u)|=1$ for all $u\in V$. Hence for any $u\in V$ the value $\Phi_{\ell}(u)=\frac{\deg_{\ell}(u)}{|L(u)|}=\deg_{\ell}(u)$ equals the number of neighbors that have selected the same candidate color, i.e., the number of neighbors $v$ with $s_{\ell}(u)=s_{\ell}(v)$. So the graph $G_{\ell}[V_{<4}]$ has maximum degree at most $\Delta_{\ell}=3$. We compute an MIS on $G_{\ell}[V_{<4}]$ in $O(\logstar K)$ rounds: The given $K$-coloring of $G$ induces a $K$-coloring of $G_{\ell}[V_{<4}]$ which we can transform to an $O(\Delta_{\ell}^2)=O(1)$ coloring in $O(\logstar K)$ rounds with Linial's algorithm \cite{linial92}. Then we compute an MIS on $G_{\ell}[V_{<4}]$ by iterating through the color classes. Each node $u\in V_{<4}$ that is contained in the MIS colors itself permanently with $s_{\ell}(u)$; all nodes not in the MIS forget their candidate color and remain uncolored. Due to the maximum degree of $G_{\ell}[V_{<4}]$ the MIS has size at least $|V_{<4}|/4\geq n/8$, that is, at least a $1/8$ fraction of all nodes get colored.

The runtime equals the time needed for $\lceil\log C\rceil$ iterations of the algorithm from \Cref{lem:derand} (plus $O(\log^*\initC)$ rounds for computing the MIS), i.e., $\bigO{\log C\cdot D\cdot(\log\initC+\log\Delta+\log\log C)}$.
\end{proof}


\section{Efficient $(\mathit{degree}+1)$-List Coloring in $\mathsf{CONGEST}$} \label{sec:effCongest}
For solving the $(\mathit{degree}+1)$-list-coloring problem deterministically in $\poly\log n$ rounds in \CONGEST, we first compute an $(\alpha,\beta)$-network decomposition with $\alpha,\beta=\poly\log n$ (see \Cref{def:nd}), that is, a decomposition of the network graph into $\poly\log n$ color classes such that connected components (\emph{clusters}) in each color class have \emph{small diameter}, i.e., diameter $\poly\log n$. Then we iterate through the $\poly\log n$ classes and apply the algorithm from \Cref{thm:diameterListsimple} on the small diameter clusters of one class in parallel. The concept of network decomposition was introduced in \cite{awerbuch89} and was later differentiated into weak and strong decompositions \cite{LS93}. However, both concepts are not suitable for our purpose. In a weak decomposition, when solving the list-coloring on a cluster, it might be necessary to communicate also via edges outside the cluster. Hence, it is not possible to run the algorithm from \Cref{thm:diameterListsimple} on $\omega(\log n)$ clusters in parallel as there might be an edge being involved in all these computations. In contrast, a strong decomposition would be sufficient, however, we do not know how to compute it in $\poly\log n$ rounds in \CONGEST.
 We therefore introduce a slightly more general definition of a network decomposition, which also includes a congestion parameter $\kappa$. A similar definition was for example used previously in \cite{GK19}.

\begin{definition}[Network decomposition with congestion]
\label{def:nd}
An $(\alpha,\beta)$-network decomposition with congestion $\kappa$ of a graph $G=(V,E)$ is a partition of $V$ into clusters $C_1,\dots,C_p$ together with associated subtrees $T_1,\ldots,T_p$ of $G$ and a color $\gamma_i\in\{1,\dots,\alpha\}$ for each cluster $C_i$ such that
\begin{enumerate}[(i)]
\item the tree $T_i$ of cluster $C_i$ contains all nodes of $C_i$ (but it might contain other nodes as well)
\item each tree $T_i$ has diameter at most $\beta$
\item clusters that are connected by an edge of $G$ are assigned different colors
\item each edge of $G$ is contained in at most $\kappa$ trees of the same color
\end{enumerate}
\end{definition}

When we assume to have a network decomposition on a graph, we require that each node knows the color of the cluster it belongs to and for each of its incident edges $e$ the set of associated trees $e$ is contained in. Note that a decomposition according to this definition has weak diameter $\beta$ and a strong network decomposition is a decomposition with congestion 1 where the tree $T_i$ of each cluster $C_i$ contains exactly the nodes in $C_i$.

\begin{theorem}[\cite{RG19}]
\label{thm:rozhon}
There is a deterministic algorithm that computes an $\left(\bigO{\log n},\bigO{\log^3n}\right)$-network decomposition with congestion $\bigO{\log n}$ in $\bigO{\log^8n}$ rounds in the \CONGEST model.\footnote{In on-going unpublished work \cite{RG19personal} it is shown that diameter and runtime can be improved. These improvements carry over to our results.}
\end{theorem}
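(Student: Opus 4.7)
The plan is to invoke the ball-carving algorithm of Rozho\v{n}--Ghaffari and verify that it produces clusters and trees satisfying all four conditions of \Cref{def:nd}, including the congestion parameter $\kappa$ which is not part of the classical network-decomposition definition. The algorithm operates in $\alpha = O(\log n)$ outer phases, each producing a single color class by working on the subgraph of nodes not yet assigned to any cluster. Within a phase, $O(\log n)$ inner bit-subphases are executed, one for each bit of the $O(\log n)$-bit cluster identifiers; in a subphase, the currently ``live'' clusters grow ball-pieces of radius $O(\log n)$ in the residual graph, and conflicts between differently-labeled live clusters are resolved by a swallow-the-smaller rule. After the last subphase, the surviving clusters are pairwise non-adjacent and are assigned the current color, while a standard progress argument shows that at least half of the previously uncolored nodes get clustered, bounding the number of phases by $O(\log n)$.

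The diameter bound $\beta = O(\log^3 n)$ then follows by telescoping the $O(\log n)$ subphases, each of which attaches at most a radius-$O(\log n)$ ball-piece (plus the swallowing attachment path) to any surviving tree. For the congestion bound, trees of the same color are necessarily produced in the same phase, so it suffices to bound the number of same-phase trees through any fixed edge $e$. The swallow-the-smaller rule ensures that as soon as two growing balls meet along $e$, one is absorbed into the other, so $e$ lies in the new ball-pieces of at most $O(1)$ distinct live trees per subphase; summing over $O(\log n)$ subphases per phase yields $\kappa = O(\log n)$.

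The $O(\log^8 n)$ round complexity is obtained by implementing each subphase over the current cluster trees in the \CONGEST model: a single subphase costs $O(\beta) = O(\log^3 n)$ rounds to propagate growth decisions along trees, plus an $O(\log n)$ pipelining overhead for $O(\log n)$-bit identifiers on bandwidth-limited edges, and an additional $O(\log n)$ slowdown to sequentialize the up to $O(\log n)$ trees that may share an edge. Multiplied by $O(\log n)$ subphases per phase and $O(\log n)$ phases, this gives the claimed bound. The main obstacle, and the reason the construction is more subtle than the sketch above suggests, is designing the swallow-the-smaller rule so that it is both deterministic and implementable with $O(\log n)$-bit messages, while simultaneously certifying both the $O(\log n)$ color bound and the $O(\log n)$ congestion bound; getting these two parameters to hold together without losing in the progress guarantee is the delicate part that forces the longer, more careful analysis of \cite{RG19}.
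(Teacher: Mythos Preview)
The paper does not prove this theorem at all: it is stated as a black-box citation of the Rozho\v{n}--Ghaffari result \cite{RG19}, and the paper immediately invokes it in the proof of \Cref{cor:effGraph} without reproducing any of the construction. There is therefore no ``paper's own proof'' to compare your proposal against.

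That said, a brief remark on your sketch of \cite{RG19}: the overall skeleton (outer color phases, inner bit-subphases over the $O(\log n)$ identifier bits, ball growth of radius $O(\log n)$ per subphase, telescoping to $\beta=O(\log^3 n)$) is right, but the conflict-resolution rule you describe is not quite the one in \cite{RG19}. There is no symmetric ``swallow-the-smaller'' rule. Rather, in each bit-subphase the live clusters are split by the current identifier bit into attackers and defenders; attackers grow and steal boundary nodes from adjacent defenders, and a defender dies (its remaining nodes become unclustered and wait for a later phase) if it loses more than half of its vertices. This asymmetric kill rule is precisely what drives both the progress bound (at most half the live vertices are lost per phase) and the congestion bound (a node can become a Steiner node of at most one new tree per subphase, since only one attacking cluster can annex it), so your ``$O(1)$ distinct live trees per subphase through any edge $e$'' conclusion is correct but for a different reason than the one you give. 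If you intend to actually reproduce the proof rather than cite it, you would need to replace the swallow-the-smaller mechanism with this attacker/defender dynamic.
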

We use \Cref{thm:rozhon} and \Cref{thm:diameterListsimple} to list-color graphs efficiently in the \CONGEST model.

\medskip

\noindent\textbf{\Cref{cor:effGraph}}
\textit{There is a deterministic \CONGEST algorithm that solves the list-coloring problem for instances $G=(V,E)$ with $L(v)\subseteq[C]$ and $|L(v)|\geq\deg(v)+1$ in $\bigO{\log^8n}$ rounds.}

\begin{proof}
Given a graph $G$, we compute an $\left(\bigO{\log n},\bigO{\log^3n}\right)$-network decomposition with congestion $\bigO{\log n}$ in $O(\log^8 n)$ rounds  using the algorithm from \Cref{thm:rozhon}. To list-color $G$ we iterate through the color classes of the network decomposition. When handling a single color class we need to solve a list-coloring problem on each cluster which is done by applying \Cref{thm:diameterListsimple} on all clusters of the color class in parallel. We continue with a detailed description and runtime analysis of the process: Let $L_G(v)$ be the initial list of $v\in V$ in $G$. Assume all clusters with colors $1,\dots,k-1$ are already list-colored. To list-color the clusters of color $k$, every node in such a cluster $\mathcal{C}$ updates its list, i.e., it deletes all colors from its list $L_G(v)$ taken by already colored neighbors in $G$ and obtains a new list $L_{\mathcal{C}}(v)$. Let $\deg_{\mathcal{C}}(v)$ denote the degree of $v$ in the graph induced by the nodes in cluster $\mathcal{C}$ and $\deg_G(v)$ the degree of $v$ in $G$. We obtain $|L_{\mathcal{C}}(v)|\geq\deg_{\mathcal{C}}(v)+1$ because initially we have $|L_G(v)|\geq\deg_G(v)+1=\deg_{\mathcal{C}}(v)+|\nh_G(v)\setminus \mathcal{C}|+1$ and we remove at most one color from $|L_G(v)|$ for each neighbor in $|\nh_G(v)\setminus \mathcal{C}|$. 

Thus we can apply the algorithm from \Cref{thm:diameterListsimple} for each cluster $\mathcal{C}$ of color $k$ in parallel where all aggregation and broadcast in the derandomization part is done over the associated tree of the cluster. This tree has diameter $O(\log^3n)$ (note that $n$ is the size of $G$ and not of the tree). Messages over edges that are contained in more than one tree are pipelined and as each edge of $G$ is contained in at most $\bigO{\log n}$ trees of the same color, the number of rounds to list-color all clusters of color $k$ is at most $\bigO{\log n}$ times the number of rounds required to list-color a single cluster (without pipelining) which is $\bigO{\log^4n\cdot\log C\cdot(\log\Delta+\log\log C)}$ according to \Cref{thm:diameterListsimple}. So in total $\bigO{\log^5n\cdot\log C\cdot(\log\Delta+\log\log C)}$ rounds are needed to list-color a single color class of the network decomposition. After iterating through all $\bigO{\log n}$ colors, each node chose a color from its list and has no conflict with a neighbor. Hence we computed a $(\mathit{degree}+1)$-list-coloring of $G$ in time $\bigO{\log^8n+\log^6n\cdot\log C\cdot(\log\Delta+\log\log C)}=\bigO{\log^8n}$.
\end{proof}


\section{List Coloring in the CONGESTED CLIQUE and MPC}
\label{sec:CC_MPC}

In this section we show how to adapt the algorithm from \Cref{thm:diameterListsimple} to obtain faster algorithms in the CONGESTED CLIQUE and MPC model. As in these models we have direct communication between the nodes/machines, the derandomization from \Cref{thm:colorhalf} can be sped up in three ways: (1) Communication with a leader can be done directly, removing the dependence on the diameter in the runtime; (2) due to the large routing capabilities of the respective models, we can derandomize $\log n$ bits of the seed in constant time (as described in the following proofs), reducing the influence of the seed-length in the runtime to a constant factor. Sometimes we can even change the base of the bit representation of the colors to get further speedups; (3) once the graph is sparse enough we can send it to one machine (node) and solve it locally in that machine. Thus the $\log n$-factor that is due to coloring a constant fraction of the vertices in each iteration can be turned into a $\log \Delta$-factor.

In essence, using the algorithm from \Cref{sec:diameterList} but choosing several bits of the random seeds at once and gradually adapting the base of the bit representation of the colors to the current routing capabilities---the fewer uncolored vertices remain, the larger the routing capabilities become compared to the number of uncolored nodes---imply the following result. 

\medskip 

\noindent\textbf{\Cref{thm:CC}}
\textit{There is a deterministic \text{CONGESTED CLIQUE} algorithm that solves the $(\mathit{degree}+1)$-list-coloring problem for instances $G=(V,E)$ with $L(v)\subseteq[C]$ and $|L(v)|\geq\deg(v)+1$ in time $O(\log\log\Delta\log C)$.}

\medskip

\begin{proof}
We use the same algorithm as in \Cref{thm:diameterListsimple} but speed up the derandomization step (cf. \Cref{lem:derand}) by fixing $\Omega(\log n)$ bits of the random seed in $O(1)$ rounds. That is, we fix one bit of each node's candidate color in a constant number of rounds. Assume the bits $\{1,\dots,\ell-1\}$ are already fixed and we want to fix the $\ell$-th bit. Instead of first using Linial's algorithm to compute an $O(\Delta^2)$ coloring we simply use the unique Ids of the nodes as an input coloring which yields a seed length of $O(\log n)$. To fix the $\ell$-th bit of all prefixes, we split the seed into $\bigO{1}$ segments of length $\lambda\leq\log n$ and choose a \textit{partial seed} $S\in\{0,1\}^{\lambda}$ for each segment. For the first segment, a leader chooses a subset $V'\subseteq V$ of size $2^{\lambda}$ and a bijection $\mathcal{R}:V'\to\{0,1\}^{\lambda}$ and sends $\mathcal{R}(v)$ to node $v$. We say that $v$ is responsible for $\mathcal{R}(v)$. The nodes exchange the tuples $(v,\mathcal{R}(v))$ such that each node knows which node is responsible for which partial seed. Next, each $u\in V$ learns $k_0(v)$ (the number of colors in $L_{\ell-1}(v)$ starting with $0$) and $k_1(v)$ from each of its neighbors $v$. With this information, $u$ can compute $\ev[R_1\circ\ldots\circ R_{\lambda}=\mathcal{R}(v)]{\phi_{\ell}(u)}$ for any $v\in V'$ and sends this value to $v$. Now each node $u\in V'$ can compute $\sum_{v\in V}\ev[R_1\circ\ldots\circ R_{\lambda}=\mathcal{R}(u)]{\phi_{\ell}(v)}$ and sends it to the leader. The leader chooses the partial seed minimizing this sum and broadcasts it to every node. This way we proceed with all other segments and thus the derandomization of the whole seed is done in $\bigO{1}$ rounds. It follows that coloring a constant fraction of the nodes (cf. proof of \Cref{thm:colorhalf}) can be done in $\bigO{\log C}$ rounds.

After a constant number of iterations of this procedure, i.e., after $\bigO{\log C}$ rounds, at least half of the nodes are colored. This implies that we can speed up the following iterations by fixing \emph{two} bits of each node's candidate color in $O(1)$ rounds. More generally, if the number of uncolored nodes is at most $\frac{n}{2^i}$, we can fix $i$ bits of each node's candidate color in $O(1)$ rounds and thus color a constant fraction of the nodes in time $O\left(\frac{\log C}{i}\right)$. This follows from the fact that if at most $\frac{n}{2^i}$ nodes are left, we have $\Delta\leq\frac{n}{2^i}$ ($n$ is the number of nodes in the original communication network and $\Delta$ the maximum degree in the subgraph induced by the uncolored nodes) and by using Lenzen's routing algorithm, each node can send $n/\Delta\geq2^i$ values to each of its neighbors in $O(1)$ rounds. E.g., for $i=2$, each node $u$ sends $k_{00}(u)$ (the number of colors in $L_{\ell-1}(u)$ starting with $00$), $k_{01}(u)$, $k_{10}(u)$ and $k_{11}(u)$ to each of its neighbors. By adapting the derandomization process in \Cref{lem:derand} in a straightforward fashion, we can fix the next two bits of each node's candidate color.

As a result, it takes \[O\left(\sum_{i=1}^{\log\Delta}\frac{\log C}{i}\right)=O\left(\log\log\Delta\log C\right)\] rounds until the number of uncolored nodes is reduced to $n/\Delta$. Then we can use Lenzen's routing algorithm \cite{Lenzen13} to send in $O(1)$ rounds the subgraph of uncolored nodes to a leader which locally solves the problem.
\end{proof}
Next, we turn to coloring in the MPC model and we first formally explain how a $(degree+1)$-list-coloring instance is given in the MPC model. 
 
\noindent\textbf{\boldmath$(degree+1)$-list-coloring instance in MPC.} Given a graph with $n$ nodes and $m$ edges, we assume that each machine has a local memory of $S=\Theta\left(n^{\alpha}\right)$ for some constant $\alpha>0$ and we have $\Theta\left(\frac{m+n}{S}\right)$ machines.\footnote{The constant $\alpha$ cannot be chosen by the algorithm designer but is determined by the system. However, the algorithm designer can choose the constant in the $\Theta(\cdot)$ for $S$ and for the number of machines.} Each edge $\{u,v\}$ is stored as $\{\text{Id}_u,\text{Id}_v\}$ on some machine. For the $(degree+1)$-list-coloring problem, for each $u$ and each color $c$ from $u$'s list there is a \emph{list entry} $(\text{Id}_u,c)$ stored on some machine. Both, edges and list entries, can be distributed adversarially on the machines.

In the following theorems, we further assume that we have directed edges $(u,v)$ and $(v,u)$ for each edge $\{u,v\}\in E$ stored on the machines and the edges and list entries are sorted in lexicographic order (cf. \Cref{def:sorting}). Further we assume that the machine storing $(u,v)$ knows on which machine $(v,u)$ is stored. This can be achieved in $O(1)$ in the following way: If machine $i$ stores an edge $\{u,v\}$, it replaces the edge by the triples $(u,v,i)$ and $(v,u,i)$. We sort these triples as well as the list entries $(u,c)$ in lexicographic order in $O(1)$ rounds using the algorithm from \Cref{lem:MPCbasics}. Assume that after the sorting, $(u,v,i)$ is stored on machine $j$ and $(v,u,i)$ is stored on machine $k$. Via communication over machine $i$, machine $j$ learns that $(v,u,i)$ is stored on machine $k$ and vice versa.
In \Cref{sec:MPCbasics} we formally show that seemingly trivial procedures such as deleting colors chosen by a neighbor from a list that is stored in a distributed manner can be performed in $O(1)$ rounds in the MPC model. We also provide communication primitives such as aggregation trees for the nodes of the graph and we freely use these in our proofs without explicitly mentioning them. For the details we refer to \Cref{sec:MPCbasics}.
\begin{observation}
\label{obs:MPCDeltaToList}
We can reduce the $(\Delta+1)$-coloring problem where initially no color lists are given to the $(degree+1)$-list-coloring problem in the following way: Let $v_1,\dots,v_{\deg(u)}$ be $u$'s neighbors sorted by increasing Id. Each machine storing an edge $(u,v)$ learns $v$'s position within the list of $u$'s neighbors, i.e., the $i\in\{1,\dots,\deg(u)\}$ for which $v=v_i$, in $O(1)$ rounds (cf. \Cref{cor:rank}) and writes the list entry $(u,i)$ to its memory (w.l.o.g. we may assume that so far, each machine only used half of its memory). The machine storing $(u,v_{\deg(u)})$ produces both list entries $(u,\deg(u))$ and $(u,\deg(u)+1)$. This way we produced a color list $L(u)\subseteq[\Delta+1]$ of size $\deg(u)+1$ for each node $u$. Thus all our $(degree+1)$-list-coloring results also apply to the standard $(\Delta+1)$-coloring problem. 
\end{observation}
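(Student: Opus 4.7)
The plan is to verify the three things the observation implicitly asserts: (a) the listed construction can be carried out in $O(1)$ MPC rounds, (b) the resulting instance is a valid $(\mathit{degree}+1)$-list coloring instance with color space $[\Delta+1]$, and (c) any solution to that instance is a proper $(\Delta+1)$-coloring of the original graph.

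For (a), the crucial subroutine is computing, for every directed edge $(u,v)$, the rank of $v$ within $u$'s sorted adjacency list. Since the preamble established that directed edges $(u,v)$ and $(v,u)$ can be produced, co-located information exchanged, and the edge list sorted in lexicographic order in $O(1)$ MPC rounds via \Cref{lem:MPCbasics}, a rank computation follows immediately from \Cref{cor:rank}: after sorting the directed edges by $(u, v)$-key, the position of each $(u,v)$ within its block of constant source $u$ equals the desired rank $i$, and this index can be broadcast back to the holding machine by a prefix-sum/aggregation call over the sorted array. The one subtle point is identifying the unique machine that holds the \emph{last} edge $(u, v_{\deg(u)})$ so that it knows to emit the extra list entry $(u,\deg(u)+1)$; this is just the machine whose block ends a run, detectable from its local view after sorting.

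For (b), after the construction each node $u$ has list $L(u)=\{1,2,\dots,\deg(u)+1\}$, which is a subset of $[\Delta+1]$ (since $\deg(u)\leq \Delta$) and has exactly $\deg(u)+1$ entries. The only delicate memory issue is that we roughly double the number of stored items (edges plus list entries), but this is accounted for by the remark that each machine can be assumed to occupy only half of its memory beforehand; alternatively we simply pad the number of machines by a constant factor, leaving the total memory within the asserted $\tilde O(m+n)$ bound.

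For (c), correctness of the reduction is immediate: a proper list coloring $\phi$ of the constructed instance assigns $\phi(u)\in L(u)\subseteq [\Delta+1]$ with $\phi(u)\neq\phi(v)$ for every edge $\{u,v\}$, which is by definition a $(\Delta+1)$-coloring of $G$.

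The main obstacle is therefore not mathematical but bookkeeping: making sure every step (detection of the last edge per source, tagging of ranks, spawning of the extra list entry, re-sorting into the list-entry format required by the subsequent $(\mathit{degree}+1)$-list coloring algorithm) fits into $O(1)$ MPC rounds under sublinear local memory. Each of these reduces to a constant number of invocations of the basic primitives in \Cref{sec:MPCbasics} (sorting, aggregation, broadcast along aggregation trees), so the total round complexity remains $O(1)$, and every \Cref{cor:MPClin,cor:MPCsub} bound for the list-coloring problem carries over to $(\Delta+1)$-coloring up to an additive constant.
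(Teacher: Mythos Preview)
Your proposal is correct and follows the same approach as the paper; indeed, the paper gives no separate proof for this observation, and your write-up is a careful expansion of the justification already embedded in the observation's statement (the appeal to \Cref{cor:rank} for ranks, the half-memory assumption, and the implicit claim that a list coloring of the constructed instance is a $(\Delta+1)$-coloring).
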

\medskip

The algorithm in \Cref{thm:colorhalf} computes an MIS on constant degree subgraphs when every node has chosen a candidate color. To keep our MPC coloring algorithm self-contained and independent from the implementation of an MIS algorithm we explain how to avoid the computation of an MIS. 

\noindent\textbf{How to Avoid MIS.} We adapt the algorithm from \Cref{thm:colorhalf} such that it is not necessary to compute an MIS at the end. To this end, we produce the coins in \Cref{lem:derand} with higher accuracy. Concretely, we choose $\varepsilon=\frac{1}{10\Delta(\Delta+1)\lceil\log C\rceil}$, i.e., we add a $1/(\Delta+1)$ factor in the accuracy. It follows that after fixing one color-bit of all nodes, the sum of all potentials increased by at most $10\varepsilon\Delta n=\frac{n}{(\Delta+1)\lceil\log C\rceil}$ (cf. \Cref{lem:expectslack} and \Cref{lem:derand}---increasing the accuracy this way increases the runtime in \Cref{lem:derand} only by a constant factor). As initially we have
\begin{equation}\label{eq:potbound_MPC}
\sum_{v\in V}\Phi_0(v)=\sum_{v\in V}\frac{\deg(v)}{|L(v)|}<\sum_{v\in V}\frac{|L(v)|-1}{|L(v)|}\stackrel{(*)}{\leq} n-\frac{n}{\Delta+1},
\end{equation}
it follows that after fixing all $\ell=\lceil\log C\rceil$ bits we have
\begin{align}\label{eq:potincrease}
\sum_{v\in V}\Phi_{\ell}(v)\leq\sum_{v\in V}\Phi_0(v)+\frac{n}{\Delta+1}<n~.
\end{align}
For the last inequality in \ref{eq:potbound_MPC} (marked by $(*)$) we used $|L(v)|\leq\Delta+1$. To ensure this property for each node $v$ at each stage of the algorithm, we must delete colors from $v$'s list as soon as $v$'s degree decreased by more than $v$'s list size. This deletion can be done in $O(1)$ rounds in the MPC model with similar techniques as in \Cref{obs:MPCDeltaToList}.

We obtain that at least half of the nodes have $\Phi_{\ell}<2$, i.e., $\Phi_{\ell}\leq1$ which means that at least half of the nodes have at most one neighbor that chose the same candidate color. Computing an MIS on the graph induced by these nodes can be done in $1$ round by letting the vertex with the larger Id join the MIS; then all MIS nodes keep their color permanently.

\medskip

We first show how to solve $(\mathit{degree}+1)$-list-coloring with linear memory and afterwards move to the more involved sublinear memory regime. 

\medskip

\noindent\textbf{\Cref{cor:MPClin}}
\textit{There is a deterministic MPC algorithm that solves the $(\mathit{degree}+1)$-list-coloring problem in $O(\log\Delta\log C)$ rounds with linear memory.}

\begin{proof}
We proceed similarly as in the CONGESTED CLIQUE model, i.e., we fix one bit of each node's candidate color in a constant number of rounds by fixing $\Omega(\log n)$ bits of the random seed in $O(1)$ rounds in the derandomization. Assume the bits $\{1,\dots,\ell-1\}$ are already fixed and we want to fix the $\ell$-th bit. We use the Ids of the nodes as an input coloring which yields a seed length of $O(\log n)$. As all edges $(u,v)$ for $v\in\nh(u)$ and list entries $(u,c)$ fit on one machine, we can achieve that they are stored on the same machine which we name $\mathcal{M}_u$ (note that $\mathcal{M}_u=\mathcal{M}_v$ for $u\neq v$ is possible).

For the derandomization, we split the seed into $\bigO{1}$ segments of length $\lambda\leq\frac{\log n}{2}$. If $\mathcal{M}_u$ stores the edge $(u,v)$, it sends the values $k_1(u)$ and $|L_{\ell-1}(u)|$ to $\mathcal{M}_v$. We obtain that each machine $\mathcal{M}_u$ knows the values $k_1(v)$ and $|L_{\ell-1}(v)|$ for all $v\in\nh(u)$ and hence can compute a vector $\mathcal{V}_u$ of length $2^{\lambda}\leq\sqrt{n}$ with entries $\ev[R_1\circ\ldots\circ R_{\lambda}=\mathcal{R}]{\Phi_{\ell}(u)}$ for any partial seed $\mathcal{R}\in\{0,1\}^{\lambda}$ (we assume each machine knows how to compute a node's coin from a seed, cf. \Cref{eq:coins}). For derandomizing the first segment of the seed, we need to compute $\sum_{v}\mathcal{V}_v$ (where we sum the vectors componentwise) and then choose the partial seed whose entry is the smallest. For this purpose we build an aggregation tree with degree $\sqrt{n}$ and depth $O(1)$ (cf. \Cref{lem:MPCbasics}). Each machine sums up the vectors it has and sends it to its parent. As a machine has at most $\sqrt{n}$ children each sending a vector of length at most $\sqrt{n}$, a machine receives at most $n$ values. Finally, the root machine learns $\sum\mathcal{V}_v$, chooses the partial seed with minimum entry and broadcasts it to each machine. Afterwards we proceed with the next seed segment.

After fixing all $\log C$ bits, each node has chosen a candidate color and a constant fraction of the nodes can keep their color permanently. If node $u$ has permanently chosen a color, $\mathcal{M}_u$ informs $\mathcal{M}_v$ about it (for each $v\in\nh(u)$) which updates $v$'s color list.

After $O(\log\Delta)$ iterations of this procedure, i.e., after $O(\log\Delta\log C)$ rounds, there are at most $n/\Delta^2$ uncolored nodes left. It follows that the graph induced by the uncolored nodes has at most $n/\Delta$ edges, which means that it fits on a single machine together with all remaining color lists (with length $O(\Delta)$ each). Thus all machines can send their edges and color lists to a leader machine which locally list-colors the remaining graph.
\end{proof}

\noindent\textbf{\Cref{cor:MPCsub}}
\textit{There is a deterministic MPC algorithm that solves the $(\mathit{degree}+1)$-list-coloring problem in $O(\log\Delta\log C+\log n)$ rounds with sublinear memory.}

\begin{proof}
We proceed similarly as in the linear memory regime, i.e., we fix one bit of each node's candidate color in a constant number of rounds. We take the Ids of the nodes as input coloring and obtain a seed length of $O(\log n)$. With sublinear memory it is not possible to store all edges incident of a node $u$ together with all colors in $u$'s list on a single machine, but we store it on a set of machines that are connected via an aggregation tree. We call a machine storing an edge $(u,v)$ for a $v\in\nh(u)$ or a list entry $(u,c)$ for a $c\in L(u)$ a \textit{$u$-machine}. For each node $u$ we build an aggregation tree with degree $\sqrt{n}$ and depth $O(\alpha^{-1})$ with the $u$-machines as leafs (cf. \Cref{lem:MPCbasics}).

Next we describe the derandomization. Assume the bits $\{1,\dots,\ell-1\}$ are already fixed and we want to fix the $\ell$-th bit. We split the seed into $O(1/c)$ parts of length $\lambda\leq c\log n$ for a sufficiently small constant $c<\alpha$. We do an edge-based computation of the sum of all potentials. As in \Cref{lem:expectation} we have \[\ev{\sum_{v\in V}\Phi_{\ell}(v)}=\ev{\sum_{e\in E_{\ell-1}}X_e}\text{\quad with \quad}X_e=\ind_{e\in E_{\ell}}\left(\frac{1}{|L_{\ell}(u)|}+\frac{1}{|L_{\ell}(v)|}\right)~.\]
Each $u$-machine learns the values $k_1(u)$ and $|L_{\ell-1}(u)|$ in $O(1)$ rounds via communication over the aggregation tree. If a machine stores the edge $(u,v)$ it sends $k_1(u)$ and $|L_{\ell-1}(u)|$ to the machine storing $(v,u)$. We obtain that the machine storing the edge $e=(u,v)$ has the values $k_1(u)$, $|L_{\ell-1}(u)|$, $k_1(v)$ and $|L_{\ell-1}(v)|$ and hence can compute a vector $\mathcal{V}_e$ of length $2^{\lambda}\leq n^c$ with entries $\ev[R_1\circ\ldots\circ R_{\lambda}=\mathcal{R}]{X_e}$ for any partial seed $\mathcal{R}\in\{0,1\}^{\lambda}$. We connect the aggregation trees of the nodes to one tree with degree $\sqrt{S}$ and depth $O(\alpha^{-1})$. Via aggregation over this tree the root machine learns $\sum_{e\in E_{\ell-1}}\mathcal{V}_e$, chooses the partial seed with minimum entry and broadcasts it to each machine. Afterwards we proceed with the next seed segment. After fixing all $\log C$ bits, each node has chosen a candidate color and a constant fraction of the nodes can keep their color permanently. Afterwards, the machines update their color lists. That is, using the set difference algorithm from \Cref{lem:MPCbasics}, the machine storing $(u,c)$ can learn whether $c$ has been taken by one of $u$'s neighbors and deletes it if this is the case.

If $\Delta>n^{\alpha/2}$, we apply this $O(\log C$) rounds procedure $O(\log n)$ times, each time coloring a constant fraction of the nodes, yielding an $O(\log n\log C)=O(\log \Delta \log C)$ algorithm.  If $\Delta<n^{\alpha/2}$, we apply this procedure $O(\log\Delta)$ times, taking $O(\log\Delta\log C)$ rounds. Afterwards, the number of uncolored nodes is reduced to $n/\Delta^2$ and we can list-color the remaining nodes in $O(\log n)$ rounds using \Cref{lem:mpcfast}.
\end{proof}

\begin{lemma}\label{lem:mpcfast}
Consider the sublinear memory regime of the MPC model where each machine has memory $S=\Theta(n^{\alpha})$ for some $\alpha>0$. Let $G$ be a graph with $\Delta<n^{\alpha/2}$ and assume we have a total memory of $\Omega(n\Delta^2)$. Then we can solve list-coloring on $G$ in $O(\log n)$ rounds.
\end{lemma}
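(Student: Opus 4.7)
The plan is to exploit the inequality $\Delta^2 \le n^\alpha = S$, which lets us replace the bit-by-bit derandomization of \Cref{sec:diameterList} by a one-shot color choice per node, at the cost of $O(\Delta^2)$ memory per node.

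First, using the sorting and routing primitives of \Cref{sec:MPCbasics}, I would, for each node $u$, assign a dedicated \emph{host machine} $\mathcal{M}_u$ and collect onto $\mathcal{M}_u$ every edge incident to $u$, every list entry of $u$, and the entire list $L(v)$ of each neighbor $v\in\nh(u)$. Since $|\nh(u)|\le\Delta$ and $|L(v)|\le\Delta+1$, this takes $O(\Delta^2)\le S$ words per machine and $O(n\Delta^2)$ words in total, matching the assumption on total memory. The gathering itself is $O(1)$ rounds: first aggregate each list $L(v)$ onto $\mathcal{M}_v$, then let $\mathcal{M}_v$ broadcast $L(v)$ along $v$'s (at most $\Delta$) incident edges, sending a total of $O(\Delta^2)\le S$ words per machine.

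Next, in place of the bit-by-bit prefix extension, I would run a one-shot variant in which each node picks a color from its list uniformly at random, using pairwise-independent draws generated from a shared seed of length $d=O(\log n)$ via \Cref{thm:seed} with $k=2$; exactly as in \eqref{eq:potentialtry1}, the expected number of monochromatic edges is strictly less than $n$, and by tightening the accuracy of the coins as in the ``avoid MIS'' remark preceding \Cref{cor:MPClin} one can ensure that a constant fraction of the nodes end up with at most one same-colored neighbor, so that conflicts are resolved by Id comparison without any MIS computation. To derandomize in $O(1)$ rounds per iteration, split the seed into $O(1/\alpha)$ chunks of $\lambda=\Theta(\alpha\log n)$ bits; for the current chunk, every $\mathcal{M}_u$ enumerates all $2^\lambda\le\sqrt{S}$ completions and locally computes the vector of conditional contributions of $u$ to the expected number of conflicts, and these vectors are summed toward a root via a $\sqrt{S}$-ary aggregation tree of depth $O(1/\alpha)$, where each internal machine carries a load of $\sqrt{S}\cdot 2^\lambda\le S$. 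The root picks the chunk value minimizing the global conditional expectation and broadcasts it. After $O(1/\alpha)=O(1)$ chunks the seed is entirely fixed; $O(1)$-round MPC basics (\Cref{sec:MPCbasics}) then remove the taken colors from surviving lists and re-gather the shrunken one-hop neighborhoods. Iterating $O(\log n)$ times colors all nodes.

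The main obstacle I expect is the memory bookkeeping, namely checking at each stage---the initial gather, the $2^\lambda$-length aggregation, and the post-iteration re-broadcast of the now-shorter lists---that neither the per-machine budget $S$ nor the global budget $\Omega(n\Delta^2)$ is exceeded. The probabilistic analysis, by contrast, is essentially a recast of \Cref{lem:expectslack,lem:derand} with a pairwise-independent seed and a one-shot rather than a bit-by-bit color choice.
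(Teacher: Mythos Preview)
Your proposal is correct and follows essentially the same route as the paper: exploit $\Delta^2\le S$ to gather, for every vertex (equivalently, every edge), the full color lists of both endpoints, replace the bit-by-bit prefix extension by a one-shot pairwise-independent color choice from an $O(\log n)$-bit seed, derandomize the seed in $O(1)$ chunks of $\Theta(\alpha\log n)$ bits each via a $\sqrt{S}$-ary aggregation tree, and iterate $O(\log n)$ times. The only cosmetic differences are that the paper organizes the conditional-expectation computation per edge (machine storing $(u,v)$ receives $L(u)$ and $L(v)$ and computes $\E{X_e}$) whereas you organize it per vertex (machine $\mathcal M_u$ holds all neighbors' lists and computes $u$'s contribution, double-counting edges), and that the paper phrases the one-shot choice as ``choosing one bit after another from the same seed'' to recycle its earlier analysis while you invoke \eqref{eq:potentialtry1} directly; neither difference is material.
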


\begin{proof}
We take again the node Ids as initial coloring and obtain a seed length of $O(\log n)$. However, in the given setting, it is not necessary to fix each nodes candidate color bitwise. Instead each node directly chooses a color from its list uniformly at random if the seed is chosen uniformly at random. Expressed in the framework of \Cref{lem:derand} this means that given a random seed, each node chooses a color from its list by choosing one bit after another, producing the coins for each bit from the seed as before. The derandomization of this process goes along similar lines as the proof of \Cref{cor:MPCsub}. That is, we have \[\ev{\sum_{v\in V}\Phi_{\lceil\log C\rceil}(v)}=\ev{\sum_{e\in E}X_e} \text{ with } X_e=\ind_{e\in E_{\lceil\log C\rceil}}~.\]
To compute $\ev{X_e}$ for $e=(u,v)$ conditioned on some partly chosen seed, the machine storing $e$ needs the color lists of $u$ and $v$. We have at most $n\Delta$ edges and $\Delta<n^{\alpha/2}$. As we have a global memory of $\Omega(n\Delta^2)$ and each machine has memory $\Theta(n^{\alpha})$, we can achieve that there is only one $u$-machine for each node $u$ (that is, all edges outgoing from $u$ and the colors from $L(u)$ are on one machine) and each machine has $\Omega(\Delta)$ extra space for each edge it stores. The machine storing edge $(u,v)$ can therefore send $L(u)$ (which has size $O(\Delta)$) to the machine storing $(v,u)$. Now each machine knows the color lists $L(u)$ and $L(v)$ for each edge $e=(u,v)$ it stores and can compute $\ev{X_e}$. The procedure of finding a good seed goes along similar lines as described in the proof of \Cref{cor:MPCsub}.

The increase of the potential can be upper bounded as given in \Cref{eq:potincrease} and hence a constant fraction of the nodes can keep their candidate color. The procedure described so far runs in $O(1)$ rounds and hence the list-coloring takes $O(\log n)$ rounds.
\end{proof}
\section{Basic MPC Tools}
\label{sec:MPCbasics}

In this section, we provide a set of basic algorithms and constructions that can be carried out deterministically in constant time in the sublinear MPC model. The algorithms are used as subroutines in our sublinear memory MPC algorithms. In the following, when we say that an MPC algorithm is given a set or a multiset of $N$ values, we assume that each of the values is initially given to an arbitrary machine. Throughout this section, we further assume that there is a parameter $S=N^{\alpha}$ for some constant $\alpha>0$ such that every machine has space for $c\cdot S$ values for a sufficiently large constant $c>0$. We will assume that each machine stores at most $S$ values such that there is enough space on the machine to store a constant amount of additional data per stored value during a computation. We note that in the MPC model, it is standard to assume that the global memory is by a sufficiently large constant factor $c>0$ larger than the total size of the input. We can then always store the input in such a way that only a (sufficiently small) constant fraction of each machine is occupied. We will therefore also generally assume that the number of available machines is $c'\cdot N/S$ for a sufficiently large constant $c'$. We consider the following three basic problems. 

\begin{definition}[Sorting]\label{def:sorting}
  Assume that an MPC algorithm receives a multiset of $N$ values from a totally ordered set as input. The MPC algorithm is said to sort the input values if at the end machine $i\in\set{1,\dots,\lceil N/S\rceil}$ stores the input values with ranks $(i-1)S+1,\dots,iS$ in the sorted order of all the $N$ input values.
\end{definition}

\begin{definition}[Prefix Sums]\label{def:prefixsum}
  Assume that an MPC algorithm receives a set $X$ of $N$ values from a totally ordered set $\calX$ as input. Assume further that there is an associative binary operation $\oplus:\calX\times\calX \to \calX$. Assume that $x_i$ is the value of the input value with rank $i$ in the sorted order of $X$. The MPC algorithm is said to solve the prefix sums problem w.r.t.\ $\oplus$ if at the end for every $i\in \set{1,\dots, N}$, the machine holding value $x_i$ also holds the value $s_i := \bigoplus_{j=1}^i x_i$.
\end{definition}

\begin{definition}[Set Difference]\label{def:setdiff}
  Assume that an MPC algorithm receives a collection of sets $A_1,\dots,A_k$ and a collection of multisets $B_1,\dots,B_k$ as inputs. The MPC algorithm is said to solve the set difference problem if at the end for every $i\in \set{1,\dots,k}$ and for every $a\in A_i$, the machine holding $a$ knows if $a$ is contained in $B_i$.
\end{definition}

The last problem defines a tree structure that is useful to carry out computations on a set or on a collection of sets.

\begin{definition}[Aggregation Tree Structure]\label{def:aggregationtrees}
  Assume that an MPC algorithm receives a collection of sets $A_1,\dots,A_k$ with elements from a totally ordered domain as input. In an \emph{aggregation tree structure} for $A_1,\dots,A_k$, the elements of $A_1,\dots,A_k$ are stored in lexicographically sorted order (they are primarily sorted by the number $i\in\set{1,\dots, k}$ and within each set $A_i$ they are sorted increasingly). For each $i\in \set{1,\dots,k}$ such that the elements of $A_i$ appear on at least $2$ different machines, there is a tree of constant depth containing the machines that store elements of $A_i$ as leafs and where each inner node of the tree has at most $\sqrt{S}$ children. The tree is structured such that it can be used as a search tree for the elements in $A_i$ (i.e., such that an in-order traversal of the tree visits the leaves in sorted order). Each inner node of these trees is handled by a seperate additional machine. In addition, there is a constant-depth aggregation tree of degree at most $\sqrt{S}$ connecting all the machines that store elements of $A_1,\dots,A_k$.
\end{definition}

\begin{lemma}\label{lem:MPCbasics}
  There are constant-time MPC algorithms to solve the \emph{sorting}, \emph{prefix-sums}, and \emph{set difference} problems, as well as to compute an \emph{aggregation tree structure} for a given collection of input sets.
\end{lemma}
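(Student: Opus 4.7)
The plan is to bootstrap everything from sorting, which in the sublinear MPC regime with $S=N^\alpha$ can be performed in $O(1/\alpha)=O(1)$ rounds by the deterministic sample-sort algorithm of Goodrich (and subsequent refinements). I would cite this as a black box: it partitions the input by $\sqrt{S}$ recursively chosen splitters, routes each element to its bucket, and recurses to depth $O(1/\alpha)$. After termination, machine $i$ stores the $i$-th contiguous block of $S$ values in sorted order, which matches \Cref{def:sorting}. Because the global memory is by a constant factor larger than the input, we have $\Theta(N/S)$ worker machines for the data and $\Theta(N/S)$ spare machines to serve as inner tree nodes in the construction below.

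Next I would construct the aggregation tree structure. After sorting the union $A_1\cup\dots\cup A_k$ lexicographically by $(i,a)$, each leaf machine $M_j$ knows the range of $(i,a)$-labels it holds, together with the first and last label on its left and right neighbor. Group the $\Theta(N/S)$ leaf machines into consecutive blocks of size $\sqrt{S}$ and assign a fresh spare machine as their common parent; iterate this grouping $O(1/\alpha)$ times to obtain a tree of depth $O(1/\alpha)$ and degree $\sqrt{S}$ whose in-order traversal visits the leaves in sorted order. For each $i\in\{1,\dots,k\}$ whose elements span several leaf machines, the per-set tree is simply the induced subtree of leaves containing elements of $A_i$ (plus their ancestors up to the lowest common ancestor); each parent learns the first/last $(i,a)$ it covers by a single round of bottom-up communication. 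One extra global constant-depth, degree-$\sqrt{S}$ aggregation tree on top of all worker machines is built identically.

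Prefix sums then reduces to two traversals of the global aggregation tree. Each leaf machine first computes locally the $\oplus$-sum of its block; these partial sums are combined bottom-up in $O(1/\alpha)$ rounds, with each inner node storing the $\oplus$-sum of every subtree it covers. A top-down sweep then sends to each leaf the $\oplus$-sum of all values strictly to the left of its block, after which the leaf finishes the internal prefix-sums within its block in one local computation step. Correctness follows from associativity of $\oplus$, and every round moves $O(\sqrt{S})$ values per machine, well within the memory bound.

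Set difference is handled by reducing to sorting plus a neighbor comparison. For every $a\in A_i$ emit the triple $(i,a,0)$ and for every $b\in B_i$ emit $(i,b,1)$; sort all these triples lexicographically. Since the $A_i$'s are sets, each $(i,a,0)$ appears exactly once and, in the sorted order, is immediately followed either by some $(i,a,1)$ (when $a\in B_i$) or by a triple with strictly larger $(i,a)$-key (when $a\notin B_i$). Each machine checks this locally for every triple it stores, and for triples at its left or right boundary it queries the neighboring machine; the answer for each $a\in A_i$ is then routed back to the machine originally holding $a$ via a constant number of sorting/routing passes. The only mild obstacle across all four subroutines is ensuring that inner tree machines exist and are reachable in the sorted layout; this is the reason I build the aggregation tree immediately after sorting and use it as the uniform communication primitive for the remaining problems.
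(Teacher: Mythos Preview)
Your proposal is correct and covers all four primitives, but it diverges from the paper's proof in two places worth noting.

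For \emph{set difference}, you take a markedly simpler route than the paper. The paper builds separate aggregation trees for $A$ and $B$ and then carries out a rather delicate parallel search: each $A$-tree machine $M_A$ navigates down the $B$-tree to locate a machine $M_B[M_A]$ whose subtree covers all of $M_A$'s potential matches, with a careful case analysis to bound how many $A$-machines can query the same $B$-machine. Your approach of tagging elements as $(i,a,0)$ or $(i,b,1)$, sorting once, and checking the immediate successor is both correct and considerably more elementary; it also handles the multiset nature of the $B_i$ without extra work. What the paper's construction buys is that the two aggregation trees remain intact and reusable, whereas your method consumes an additional sort and a routing pass to return answers to the original machines, but for a constant-round statement this makes no difference.

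For the \emph{aggregation tree structure}, you build the global tree first and take induced subtrees for each $A_i$, whereas the paper first computes each element's rank within its set $A_i$ via a custom associative prefix-sum operation and then builds dedicated per-set trees from that information. Your construction is fine functionally, but note that \Cref{def:aggregationtrees} asks that ``each inner node of these trees is handled by a separate additional machine.'' With induced subtrees, a single global inner node may serve as an inner node for several per-set trees simultaneously. This is not fatal, since at any level at most $O(\sqrt{S})$ sets $A_i$ straddle a given node's child boundaries and hence the congestion stays within $O(S)$, but you should state this bound explicitly; as written, the proposal asserts the induced subtree suffices without verifying that parallel per-set aggregations do not overload shared inner nodes.
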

\begin{proof}
  The constant-time sorting algorithm follows directly from the constant-time sorting algorithm in the MapReduce model that is described by Goodrich et al.~\cite{goodrich11}. In \cite{goodrich11}, the authors consider an I/O-bound MapReduce model, which can be simulated efficiently by the MPC model (the two models are essentially equivalent). Theorem 3.1 in \cite{goodrich11} gives an efficient simulation of BSP algorithms in the I/O-bound MapReduce model. In combination with the BSP sorting algorithm in \cite{goodrich99}, this yields the desired sorting algorithm in the MapReduce and thus also in the MPC model.

  The constant-time prefix-sums algorithm also follows form \cite{goodrich11}. In \cite{goodrich11}, it is described how to compute the prefix sums in constant time if each machine only has one value. Note however that if we first locally compute the sum of all values of each machine and afterwards compute the prefix sums of these values, the prefix sums for all the elements can easily be derived in a constant number of additional rounds.

  As the third part of the proof, we show how to construct an aggregation tree structure in constant time. We first sort the elements of the sets $A_1,\dots,A_k$ lexicographically (first by the set index $i\in\set{1,\dots,k}$ and then within each set $A_i$). As described above, this can be done in $O(1)$ time in the MPC model. As a next step, for every $i\in\set{1,\dots,k}$ and each element $a\in A_i$, we make sure that the machine storing $a$ learns the cardinality $|A_i|$ of the set $A_i$ and that the machine learns the position of $a$ within the sorted order of the elements of $A_i$. Note that as soon as this is done, every machine that stores elements of a set $A_i$ knows exactly where the other elements of $A_i$ are stored and therefore building the aggregation trees for each $A_i$ is straightforward.

  To compute the positions of the elements within the sets $A_i$, we use the prefix sums algorithm discussed above. Assume that all the elements of $A_i$ are from a globally ordered domain $\calX$. We first define an associative binary relation $\oplus: (\mathbb{N}\times \calX)\times (\mathbb{N}\times \calX) \to (\mathbb{N}\times \calX)$ as follows. For $a_1,a_2\in \mathbb{N}$ and $x_1,x_2\in \calX$, we define
  \[
  (a_1,x_1)\oplus(a_2,x_2) :=
  \begin{cases}
    (a_1, x_1 + x_2) & \text{if $a_1=a_2$}\\
    (a_1, x_1) & \text{if $a_1 > a_2$}\\
    (a_2, x_2) & \text{if $a_2 > a_1$}
  \end{cases}\quad.
  \]
  It is easy to check that the binary operation $\oplus$ is associative. For each element $a\in A_i$, we now build the tuple $(i, a)$ (on the machine where $a$ is stored) and we apply the prefix sums discussed above to these tuples w.r.t.\ the binary operation $\oplus$. Because the elements of all the sets $A_1\cup\dots\cup A_k$ are sorted by increasing set index $i\in \set{1,\dots,k}$, this prefix sums operation gives exactly what we need. For each set $A_i$, it numbers the elements in sorted order from $1$ to $|A_i|$. To determine the size of each $A_i$, we can use an analogous algorithm to number the elements of each set $A_i$ in the reverse direction. If an element $a\in A_i$ knows its position in $A_i$ from both sides, it also directly knows the size of $A_i$. This therefore shows that the aggregation trees for each set can be computed in constant time in the MPC model. The aggregation tree connecting all the machines that store elements of $A_1,\dots,A_k$ can be done in the same way by treating the collection of the elements of all the sets as one large set.

  It remains to show that also the set difference operation can be carried out in $O(1)$ MPC rounds. First note that we can w.l.o.g.\ assume that $k=1$. That is, we are only given a single set $A$ and a single multiset $B$ for each $a\in A$, we need to determine whether $a\in B$. To reduce the general problem to the problem with a single set $A$ and a single multiset $B$, we define $A:=\bigcup_{i=1}^k\bigcup_{a\in A_i} (i, a)$ and  $B:=\bigcup_{i=1}^k\bigcup_{b\in B_i} (i, b)$. As a next step, we compute an aggregation tree structure for $A$ and $B$. Note that while $B$ is a multiset, formally the aggregation tree structure is only defined for sets. For the construction, we can easily turn $B$ into a set by arbitrarily (but uniquely) labeling the elements of $B$ (e.g., by only using a single copy of each element of $B$ per machine and adding the machine number as a label to the elements). Further, after bulding the aggregation tree for $B$, we can efficiently (i.e., in constant time) remove all except one copy for each element in $B$ so that in the following, we can w.l.o.g.\ assume that $B$ is a set. In the following, we will refer to the aggregation tree for set $A$ as the $A$-tree and we similarly refer to the aggregation tree for set $B$ as the $B$-tree.

  Note that because of the search tree property of the aggregation trees of $A$ and $B$, for a given element of $a\in A$ it is straightforward to determine in constant time if $a\in B$ and similarly for an element $b\in B$ one can easily check in constant time if $b\in A$. The challenge is to efficiently do these searches for all elements in parallel. To describe this parallel search, we first define some notation. Let $\calM_A$ be the set of machines that participate in the $A$-tree (i.e., the machines storing the elements of $A$ and the machines handling the inner nodes of the aggregation tree). Similarly, let $\calM_B$ be the set of machines that participate in the $B$-tree. For simplicity, assume that both aggregation trees have height exactly $h$ for some $h=O(1)$ and that all the leaf nodes (i.e., the machines storing the elements of $A$ and $B$) are on level $h$ of the respective tree. This is easy to guarantee by adding some dummy nodes where necessary. For every $\ell\in \set{0,\dots,h}$, let $\calM_A^{(\ell)}\subseteq \calM_A$ and $\calM_B^{(\ell)}$ be the subsets of the machines in $\calM_A$ and $\calM_B$ that are on level $\ell$ of the respective tree (where level $0$ consists of the root nodes of the two trees). For each machine $M$ in $\calM_A$ or $\calM_B$, let $x_{\min}(M)$ and $x_{\max}(M)$ be the smallest and largest elements stored in the subtree of $M$. 

  As a first step, we compute $M_B[M_A]\in \calM_B\cup \set{\bot}$ for each machine $M_A\in \calM_A$ in such a way that if $M_B[M_A]=\bot$, there are no elements of $A\cap B$ stored in the subtree of $M_A$ of the $A$-tree and otherwise, (among other things) the subtree of $M_B[M_A]$ of the $B$-tree contains all the values of $A\cap B$ that are stored in the subtree of $M_A$ of the $A$-tree. For a fixed $M_A\in \calM_A$, the machine $M_B[M_A]$ is determined by using the following iterative process. We initially set $M_B[M_A]\in\calM_B^{(0)}$ to be the root machine of the $B$-tree. Assume that we have currently set $M_B[M_A]$ such that $M_B[M_A]\in \calM^{(\ell)}$ for some $\ell\in\set{0,\dots,h}$. If $\ell=h$ (i.e., $M_B[M_A]$ is a leaf machine), we assign $M_B[M_A]$ to $M_A$. Otherwise, let $M_{B,1},\dots,M_{B,k}\in\calM_B^{(\ell+1)}$ be the direct children of $M_B[M_A]$ in the $B$-tree.  If there is exactly one such child machine $M_{B,i}$ such that the value ranges of $M_{B,i}$ and $M_A$ intersect (i.e., such that $x_{\max}(M_A)\geq x_{\min}(M_B[M_A])$ and $x_{\min}(M_A)\leq x_{\max}(M_B[M_A])$), we set $M_B[M_A]:=M_{B,i}$ and continue. Otherwise, if there is no machine $M_{B,i}$ for which the ranges of $M_A$ and $M_{B,i}$ intersect, we define $M_B[M_A]:=\bot$ and if there are at least $2$ child machines $M_{B,i}$ for which the value range intersects with the value range of $M_A$, we assign $M_B[M_A]$ to $M_A$. Note that the construction of $M_B[M_A]$ immediately implies that all the values of $B\cap A$ that are stored in the subtree of $M_A$ in the $A$-tree are stored in the subtree of $M_B[M_A]$ in the $B$-tree (and if $M_B[M_A]=\bot$, it implies that the intersection of the values stored in the subtree of $M_A$ and $B$ is empty).

  We next show that the machines $M_B[M_A]\calM_B$ for all machines $M_A\in\calM_A$ can be done in constant time in the MPC model. Clearly, if $M_A'$ is a machine in the subtree of $M_A$ (in the $A$-tree) then also $M_B[M_A]$ is a machine in the subtree of $M_B[M_A]$ (in the $B$-tree). The assignment can therefore be computed in a top-down fashion. It remains to show that the computation does not exceed the communication budget of each MPC machine. In addition to computing the assignment, we make sure that for each machine $M_A\in\calM_A$, if $M_B[M_A]$ is not a leaf machine of the $B$-tree, $M_A$ also learns the complete state of $M_B[M_A]$. Note that the since the degree of both trees is at most $\sqrt{S}$, the state of $M_B[M_A]$ consists of $\sqrt{S}$ values (the information about the $\sqrt{S}$ children and their value ranges). Therefore, when moving down the $A$-tree from a node $M$ to its at most $\sqrt{S}$ children, the information about $M_B[M]$ can be copied to the children of $M$ in a single MPC round. Hence, when computing the machine $M_B[M_A]$ assigned to $M_A$, we can assume that $M_A$ already knows the state of the machine $M_B[M]$ for the parent machine $M$ of $M_A$ (if $M_A$ is not the root node of the $A$-tree). If $M_A$ is the root machine of the $A$-tree, it is clear that it can compute the machine $M_B[M_A]$ and learn its state in constant time. Otherwise, since $M_A$ already knows the state of $M_B[M]$, it can then locally check if there is exactly one subtree of $M_B[M]$ that intersects with the value range of $M_A$. If this is not the case, we have $M_B[M_A]=\bot$ or $M_B[M_A]=M_B[M]$ and we are done. Otherwise, let $M_{B,i}$ be this unique subtree of $M_B[M]$ that intersects with the range of $M_A$. Machine $M_A$ then asks machine $M_{B,i}$ for its state. In order to show that we do not exceed the bandwidth requirement, we have to show that the state of each machine $M_{B,i}$ is required by at most $O(\sqrt{S})$ different $A$-tree machines. In the following, assume that $M_B$ is the parent machine of $M_{B,i}$. Assume that $M_A\in\calM_A^{(\ell)}$ for some level $\ell\geq 1$. Machine $M_A$ only asks for the state of $M_{B,i}$ if the range of the parent machine $M\in\calM_A^{(\ell-1)}$ of $M_A$ intersects with $M_{B,i}$ and with at least one other subtree $M_{B,j}$ of $M_B$ (i.e., it either also intersects fith $M_{B,i-1}$ or with $M_{B,i+1}$). Note however that there can be at most one machine in $\calM_A^{(\ell-1)}$ for which the value range intersects with $M_{B,i-1}$ and with $M_{B,i+1}$ and there can be at most one such machine for which the value range intersects with $M_{B,i}$ and with $M_{B,i}$. Only the $O(\sqrt{S})$ children machines of these two machines can ask for the state of $M_{B,i}$. The assignment $M_B[M_A]$ for all $M_A\in\calM_A$ can therefore be computed in constant time in the MPC model.

  We next show how we can use the assignment $M_B[M_A]$ to the machines $M_A\in\calM_A$ to determine the intersection of $A$ and $B$. Recall that each leaf machine $M_A\in\calM_A^{(h)}$ needs to determine which of its at most $S$ stored values of $A$ are contained in $B$. Consider some machine $M_A\in\calM_A^{(h)}$. If $M_B[M_A]=\bot$, we $M_A$ knows that there is no intersection between the values of $A$ it stores and $B$. Let us therefore assume that $M_B[M_A]\neq\bot$. We first consider the case where $M_B[M_A]$ is an inner node of the $B$-tree (i.e., $M_B[M_A]\in\calM_B^{(\ell)}$ for some $\ell<h$). Let $M_{B,1},\dots,M_{B,k}$ be the direct children of $M_B[M_A]$ in the $B$-tree. Recall that $M_A$ knows the state of $M_B[M_A]$ and it thus knows $M_{B,1},\dots,M_{B,k}$ and $x_{\min}(M_{B,i})$ and $x_{\max}(M_{B,i})$ for each $i\in\set{1,\dots,k}$. We also know that the value range of $M_A$ intersects with the ranges of at least two of the machines $M_{B,1},\dots,M_{B,k}$ as otherwise, we would have set $M_B[M_A]$ differently. Therefore, for each $M_{B,i}$, there are at most $2$ leaf machines $M_A\in\calM_A^{(h)}$ of the $A$-tree for which the value range intersects with the value range of $M_{B,i}$. The machine $M_A$ can therefore send all its stored elements of $A$ that fall within the range of machine $M_{B,i}$ for each $i\in \set{1,\dots,k}$. Note that while $M_A$ might have to communicate with up to $\sqrt{S}$ different machines, the total number of values it has to send is at most $S$. Since each such machine $M_{B,i}$ is contacted by at most $2$ machines from the $A$-tree, it can in constant time determine which of the elements it receives are stored in its subtree and report this information back to the respective $A$-tree machines. Hence, if $M_B[M_A]$ is an inner node of the $B$-tree, $M_A$ can efficiently learn its part of $A\cap B$. It therefore remains to consider the case where $M_B[M_A]$ is a leaf machine of the $B$-tree. Now, all the elements in the intersection between the elements stored at $M_A$ and the elements in $B$ are stored on machine $M_B[M_A]$ in the $B$-tree. However, we cannot bound the number of leaf machines $M_A$ that are assigned the same $B$ machine $M_B[M_A]$. The intersection between the elements stored at $M_A$ and the elements stored at $M_B[M_A]$ can therefore not be computed by direct communication betweek the two machines. Instead, we delegate this task up the $A$-tree. Let $M$ be the highest (i.e., closest to the root) ancestor of $M_A$ in the $A$-tree for which $M_B[M]=M_B[M_A]$. Instead of $M_A$, machine $M$ asks machine $M_B$ for its elements in the range between $x_{\min}(M)$ and $x_{\max}(M)$. After receiving these values, $M$ can efficiently propagate them down the $A$-tree by always only forwarding the values that fall within the range of each particular child. Thus, after $M$ knows the values of $M_B$ in the range between $x_{\min}(M)$ and $x_{\max}(M)$, $M_A$ can efficiently learn its intersection with $B$. In order to show that the algorithm works, it only remains to show that the leaf machine $M_B$ of the $B$-tree does not get too many requests from machines $M'\in\calM_A$ for which $M_B[M']=M_B$. Note however that on each level $\ell$ (i.e., in each set $\calM_A^{(\ell)}$, there can be at most $O(\sqrt{S})$ machines $M'$ for which $M_B[M']=M_B$ and such that for the parent $M''$ of $M'$, we have $M_B[M'']\neq M_B$. This concludes the proof.
\end{proof}

In the construction of the aggregation tree structure, we proved the following as an intermediate result:

\begin{corollary}\label{cor:rank}
There is a constant-time MPC algorithm that given a collection of sets $A_1,\dots,A_k$ with elements from a totally ordered domain, for all $i\in\{1,\dots,k\}$ and $a\in A_i$ the machine storing $a$ learns the rank of $a$ within $A_i$.
\end{corollary}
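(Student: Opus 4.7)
The plan is to reduce the problem to a single application of the sorting primitive followed by a single application of the prefix-sums primitive, both of which run in $O(1)$ MPC rounds by \Cref{lem:MPCbasics}. First, I would globally sort the elements of $A_1 \cup \cdots \cup A_k$ in lexicographic order: the primary key is the set index $i\in\{1,\dots,k\}$ and the secondary key is the value of the element in the underlying totally ordered domain. After sorting, for each fixed $i$, the elements of $A_i$ occupy a contiguous block of positions across the machines, listed in increasing order. In particular, $a \in A_i$ has rank equal to its position within the $i$-th block.

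The second step is to compute, for every $a \in A_i$, its position inside that block. For this I would apply the prefix-sums primitive to a sequence of tagged tuples. Concretely, on the machine that now holds an element $a\in A_i$, form the tuple $(i,1)$, and define the binary operation
\[
(i_1,c_1)\oplus(i_2,c_2) := \begin{cases} (i_1,\, c_1+c_2) & \text{if } i_1 = i_2, \\ (i_2,\, c_2) & \text{if } i_2 > i_1. \end{cases}
\]
This operation is associative, and it is applied in left-to-right scanning order on the already sorted sequence, so the case $i_2 < i_1$ cannot arise. Running the prefix-sums primitive with $\oplus$ yields at the location of each element $a\in A_i$ the pair $(i,r)$ where $r$ is exactly the rank of $a$ within $A_i$, because the counter is reset to $1$ whenever the set index increases and is incremented by $1$ for each subsequent element of the same set.

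Both the sort and the prefix-sums step run in $O(1)$ MPC rounds by \Cref{lem:MPCbasics}, and the local construction of the tuples and the final extraction of the rank are free. This yields an $O(1)$-round algorithm as required. The one subtlety worth flagging is that the rank each element learns after this procedure is its rank in the \emph{sorted copy} of $A_i$, which may sit on a different machine than the one that originally received $a$. If the statement requires the \emph{original} host machine of $a$ to know the rank, one extra routing step suffices: before sorting, tag each element with the identifier of its originating machine, and after computing the rank, send the pair (element, rank) back to that machine. Since each originating machine sends and receives at most $S$ such pairs (one per element it originally stored), this routing also fits in $O(1)$ MPC rounds, so the overall complexity remains constant.
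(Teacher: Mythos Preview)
Your proposal is correct and follows essentially the same approach as the paper: the corollary is stated there as an intermediate result from the proof of \Cref{lem:MPCbasics}, where the elements are sorted lexicographically by $(i,\text{value})$ and then a segmented prefix-sums with a ``reset on index change'' operation $\oplus$ is applied to recover each element's position within its set $A_i$. One small technical point: as written, your $\oplus$ is undefined for $i_2 < i_1$, whereas the prefix-sums primitive in \Cref{def:prefixsum} asks for a total associative operation; the paper closes this by also defining the remaining case (returning the tuple with the larger index), which you should do as well.
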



	\bibliographystyle{alpha}
\bibliography{references}

\end{document}